\newcommand{\cbr}{C_{\mathrm{bro}}}
\newcommand{\calv}{C_{\mathrm{A}}}
\newcommand{\cart}{C_{\mathrm{a}}}
\newcommand{\cven}{C_{\bar{\mathrm{v}}}}
\newcommand{\cliv}{C_{\mathrm{liv}}}
\newcommand{\ctis}{C_{\mathrm{tis}}}
\newcommand{\cinh}{C_{\mathrm{I}}}
\newcommand{\cmeas}{C_{\mathrm{measured}}}
\newcommand{\cw}{C_{\mathrm{water}}}
\newcommand{\coxy}{C_{\mathrm{O}_2}}
\newcommand{\ccd}{C_{\mathrm{CO}_2}}
\newcommand{\lliv}{\lambda_{\mathrm{b:liv}}}
\newcommand{\ltis}{\lambda_{\mathrm{b:tis}}}
\newcommand{\hen}{\lambda_{\mathrm{b:air}}}
\newcommand{\lmuca}{\lambda_{\mathrm{muc:air}}}
\newcommand{\lmucb}{\lambda_{\mathrm{muc:b}}}
\newcommand{\qbr}{q_{\mathrm{bro}}}
\newcommand{\qliv}{q_{\mathrm{liv}}}
\newcommand{\qalv}{\dot{V}_{\mathrm{A}}}
\newcommand{\qc}{\dot{Q}_{\mathrm{c}}}
\newcommand{\pr}{k_\mathrm{pr}}
\newcommand{\met}{k_{\mathrm{met}}}
\newcommand{\dke}{k_{\mathrm{diff},1}}
\newcommand{\dkz}{k_{\mathrm{diff},2}}
\newcommand{\klin}{k_{\mathrm{lin}}}
\newcommand{\vmax}{v_\mathrm{max}}
\newcommand{\km}{k_\mathrm{m}}
\newcommand{\vbro}{\tilde{V}_{\mathrm{bro}}}
\newcommand{\valv}{\tilde{V}_{\mathrm{A}}}
\newcommand{\vt}{V_{\mathrm{T}}}
\newcommand{\vtis}{\tilde{V}_{\mathrm{tis}}}
\newcommand{\vliv}{\tilde{V}_{\mathrm{liv}}}
\newcommand{\tb}{\vec{p}}
\newcommand{\bb}{\vec{b}}
\newcommand{\cb}{\vec{c}}
\newcommand{\gb}{\vec{g}}
\newcommand{\xb}{\vec{x}}
\newcommand{\ab}{\vec{\alpha}}
\newcommand{\zb}{\vec{z}}
\newcommand{\ub}{\vec{u}}
\newcommand{\di}{\mathrm{d}}
\newcommand{\R}{\mathbb{R}}
\newcommand{\cinf}{\mathcal{C}^{\infty}}
\newcommand{\bw}{\mathrm{bw}}
\newcommand{\lk}{\left(}
\newcommand{\rk}{\right)}
\newcommand{\norm}[1]{\lVert #1 \rVert}
\newcommand{\abs}[1]{| #1 |}
 \journalname{Journal of Mathematical Biology}
\begin{document}

\title{A mathematical model for breath gas analysis of volatile organic compounds with special emphasis on acetone}

\titlerunning{A mathematical model for breath gas analysis of VOCs}        

\author{Julian~King   \and
        Karl~Unterkofler  \and
        Gerald~Teschl \and
        Susanne~Teschl  \and
        Helin~Koc \and
        Hartmann~Hinterhuber  \and
        Anton~Amann}

\authorrunning{King, Unterkofler, Teschl, Teschl, Koc, Hinterhuber, Amann}  

\institute{
J.~King \at
Breath Research Institute of the Austrian Academy of Sciences\\
Rathausplatz~4, A-6850 Dornbirn, Austria\\
\email{julian.king@oeaw.ac.at}  
\and
K.~Unterkofler \at
Vorarlberg University of Applied Sciences and\\
Breath Research Unit of the Austrian Academy of Sciences\\
Hochschulstr.~1, A-6850 Dornbirn, Austria\\
\email{karl.unterkofler@fhv.at}          
\and
G.~Teschl \at
University of Vienna, Faculty of Mathematics\\
Nordbergstr.~15, A-1090 Wien, Austria\\
\email{gerald.teschl@univie.ac.at}  
\and
S.~Teschl \at
University of Applied Sciences Technikum Wien\\
H\"ochst\"adtplatz~5, A-1200 Wien, Austria\\
\email{susanne.teschl@technikum-wien.at}
\and
H.~Koc \at
Vorarlberg University of Applied Sciences \\
Hochschulstr.~1, A-6850 Dornbirn, Austria
 \and
H.~Hinterhuber \at
Innsbruck Medical University, Department of Psychiatry\\
Anichstr.~35, A-6020 Innsbruck, Austria
\and
A.~Amann (corresponding author) \at
Univ.-Clinic for Anesthesia, Innsbruck Medical University and\\
Breath Research Institute of the Austrian Academy of Sciences\\
Anichstr.~35, A-6020 Innsbruck, Austria\\
\email{anton.amann@oeaw.ac.at, anton.amann@i-med.ac.at} 
}

\date{Received: 23 March 2010 / Revised: 9 November 2010}

\maketitle
\newpage
\begin{abstract}
Recommended standardized procedures for determining exhaled lower respiratory nitric oxide and nasal nitric oxide (NO) have been developed by task forces of the European Respiratory Society and the American Thoracic Society. These recommendations have paved the way for the measurement of nitric oxide to become a diagnostic tool for specific clinical applications. It would be desirable to develop similar guidelines for the sampling of other trace gases in exhaled breath, especially volatile organic compounds (VOCs) which may reflect ongoing metabolism.

The concentrations of water-soluble, blood-borne substances in exhaled breath are influenced by:
\begin{itemize}
\item breathing patterns affecting gas exchange in the conducting airways
\item the concentrations in the tracheo-bronchial lining fluid
\item the alveolar and systemic concentrations of the compound.
\end{itemize}
The classical Farhi equation takes only the alveolar concentrations into account. Real-time measurements of acetone in end-tidal breath under an ergometer challenge show characteristics which cannot be explained within the Farhi setting. Here we develop a compartment model that reliably captures these profiles and is capable of relating breath to the systemic concentrations of acetone. By comparison with experimental data it is inferred that the major part of variability in breath acetone concentrations (e.g., in response to moderate exercise or altered breathing patterns) can be attributed to airway gas exchange, with minimal changes of the underlying blood and tissue concentrations. Moreover, the model illuminates the discrepancies between observed and theoretically predicted blood-breath ratios of acetone during resting conditions, i.e., in steady state. Particularly, the current formulation includes the classical Farhi and the Scheid series inhomogeneity model as special limiting cases and thus is expected to have general relevance for a wider range of blood-borne inert gases.

The chief intention of the present modeling study is to provide mechanistic relationships for further investigating the exhalation kinetics of acetone and other water-soluble species. This quantitative approach is a first step towards new guidelines for breath gas analyses of volatile organic compounds, similar to those for nitric oxide.

\keywords{breath gas analysis \and volatile organic compounds \and acetone \and modeling}
\subclass{92C45 \and 92C35 \and 93C10 \and 93B07}
\end{abstract}
\newpage
%
\section{Introduction}
\label{sect:intro}
%
Measurement of blood-borne volatile organic compounds (VOCs) occurring in human exhaled breath as a result of normal metabolic activity or pathological disorders has emerged as a promising novel methodology for non-invasive medical diagnosis and therapeutic monitoring of disease, drug testing and tracking of physiological processes~\cite{amannbook,amann2007,amann2004,rieder2001,miekisch2006}. Apart from the obvious improvement in patient compliance and tolerability, major advantages of exhaled breath analysis compared to conventional test procedures, e.g., based on blood or urine probes, include de facto unlimited availability as well as rapid on-the-spot evaluation or even \emph{real-time} analysis. Additionally, it has been pointed out that the pulmonary circulation receives the entire cardiac output and therefore the breath concentrations of such compounds might provide a more faithful estimate of pooled systemic concentrations than single small-volume blood samples, which will always be affected by local hemodynamics and blood-tissue interactions~\cite{ohara2008}. \\ Despite this huge potential, the use of exhaled breath analysis within a clinical setting is still rather limited. This is mainly due to the fact that drawing reproducible breath samples remains an intricate task that has not fully been standardized yet. Moreover, inherent error sources introduced by the complex mechanisms driving pulmonary gas exchange are still poorly understood. The lack of standardization among the different sampling protocols proposed in the literature has led to the development of various sophisticated sampling systems, which selectively extract end-tidal air by discarding anatomical dead space volume~\cite{setup,herbig2008,birken2006}). Even though such setups present some progress, they are far from being perfect. In particular, these sampling systems can usually not account for the variability stemming from varying physiological states.\\

In common measurement practice it is often tacitly assumed that end-tidal air will reflect the alveolar concentration $\calv$, which in turn is proportional to the concentration of the VOC in mixed venous blood $\cven$, with the associated factor depending on the substance-specific blood:gas partition coefficient $\hen$ (describing the diffusion equilibrium between capillaries and alveoli), alveolar ventilation $\qalv$ (governing the transport of the compound through the respiratory tree) and cardiac output $\qc$ (controlling the rate at which the VOC is delivered to the lungs):
\begin{equation}\label{eq:farhi}\cmeas=\calv=\frac{\cven}{\hen+\frac{\qalv}{\qc}}.\end{equation}
This is the familiar equation introduced by Farhi~\cite{farhi1967}, describing steady state inert gas elimination from the lung viewed as a single alveolar compartment with a fixed overall ventilation-perfusion ratio $\qalv/\qc$ close to one.
Since the pioneering work of Farhi, both equalities in the above relation have been challenged. Firstly, for low blood soluble inert gases, characterized by $\hen \leq 10$, alveolar concentrations resulting from an actually constant $\cven$ can easily be seen to vary drastically in response to fluctuations in blood or respiratory flow (see also~\cite{King2010GC,isoprene} for some recent findings in this context). While this sensitivity has been exploited in MIGET (Multiple Inert Gas Elimination Technique, cf.~\cite{wagner1974,wagner2008}) to assess ventilation-perfusion inhomogeneity throughout the normal and diseased lung, it is clearly problematic in standard breath sampling routines based on free breathing, as slightly changing measurement conditions (regarding, e.g., body posture, breathing patterns or stress) can have a large impact on the observed breath concentration~\cite{cope2004}. This constitutes a typical example of an inherent error source as stated above, potentially leading to a high degree of intra- and consequently inter-individual variability among the measurement results~\cite{ohara2009}. It is hence important to investigate the influence of medical parameters like cardiac output, pulse, breathing rate and breathing volume on VOC concentrations in exhaled breath. 
In contrast, while highly soluble VOCs ($\hen>10$) tend to be less affected by changes in ventilation and perfusion, measurement artifacts associated with this class of compounds result from the fact that -- due to their often hydrophilic properties -- a substantial interaction between the exhalate and the mucosa layers lining the conducting airways can be anticipated~\cite{anderson2007}. In other words, for these substances $\cmeas \neq \calv$, with the exact quantitative relationship being unknown. Examples of endogenous compounds that are released into the gas phase not only through the blood-alveolar interface, but also through the bronchial lining fluid are, e.g., acetone and ethanol~\cite{anderson2006,tsu1988}.\\

Acetone (2-propanone; CAS number 67--64--1; molar mass $58.08$~g/mol) is one of the most abundant VOCs found in human breath and has received wide attention in the biomedical literature. Being a natural metabolic intermediate of lipolysis~\cite{kalapos2003}, endogenous acetone has been considered as a biomarker for monitoring the ketotic state of diabetic and fasting individuals~\cite{tassopoulos1969,owen1982,reichard1979,smith1999}, estimating glucose levels~\cite{galassetti2005} or assessing fat loss~\cite{kundu1993}. Nominal levels in breath and blood have been established in~\cite{wangace,schwarzace}, and bioaccumulation has been studied in the framework of exposure studies and pharmacokinetic modeling~\cite{wigaeus1981,kumagai2000,moerk2006}. 

Despite this relatively large body of experimental evidence, the crucial link between acetone levels in breath and blood is still obscure, thus hindering the development of validated breath tests for diagnostic purposes. For perspective, multiplying the proposed population mean of approximately 1~$\mu$g/l~\cite{schwarzace} in end-tidal breath by the partition coefficient $\hen=340$~\cite{anderson2006} at body temperature appears to grossly underestimate observed (arterial) blood levels spreading around 1~mg/l~\cite{wangace,wigaeus1981,keller1984}. Furthermore, breath profiles of acetone (and other highly soluble volatile compounds such as 2-pentanone or methyl acetate) associated with moderate workload ergometer challenges of normal healthy volunteers drastically depart from the trend suggested by Equation~\eqref{eq:farhi}~\cite{setup,King2010GC}. In particular, the physiological meaning of these discrepancies has not been established in sufficient depth.  

With the background material of the previous paragraphs in mind, we view acetone as a paradigmatic example for the analysis of highly soluble, \emph{blood-borne} VOCs, even though it cannot cover the whole spectrum of different physico-chemical characteristics. The emphasis of this paper is on developing a mechanistic description of \textit{end-tidal} acetone behavior during different physiological states (e.g., rest, exercise, sleep and exposure scenarios). Such a quantitative approach will contribute to a better understanding regarding the relevance of observable breath concentrations of highly soluble trace gases with respect to the underlying endogenous situation and hence constitutes an indispensable prerequisite for guiding the interpretation of future breath test results. Moreover, it will allow for a standardized examination of the information content and predictive power of various breath sampling regimes proposed in the literature.
Specifically, our work also aims at complementing previous studies centered on single breath dynamics during resting conditions~\cite{tsu1988,johanson1991,anderson2003,anderson2006,kumagai2000}.

By adopting the usual compartmental approach~\cite{kumagai1995,clewell2001,moerk2006,reddybook} our formulation is simple in the sense that no detailed anatomical features of the respiratory tract must be taken into account. Although models of this type have been criticized for their underlying assumptions~\cite{hahn2003} (e.g., regarding the cyclic nature of breathing), they prove as valuable tools for capturing both short-term behavior as indicated above and phenomena that are characteristic for sampling scenarios extending over minutes or even hours. Consequently, while the physical derivation to be presented here is clearly driven by the well-established theory covering soluble gas exchange in a single exhalation framework, it extends these ideas to a macroscopic level, thus yielding a model that can serve as a template for studying the mid- to long-term kinetics of acetone and similar volatile organic compounds in breath and various parts of the human body

%
\section{Experimental basics}
\label{sect:experimental}
%
Here we shall briefly discuss the experimental background pertinent to our own phenomenological findings presented throughout the paper. In particular, these results were obtained with the necessary approvals by the Ethics Committee of the Innsbruck Medical
University. All volunteers gave written informed consent. \\

Breath acetone concentrations are assessed by means of a real-time setup designed for synchronized measurements of exhaled breath VOCs as well as a variety of respiratory and hemodynamic parameters, see Fig.~\ref{fig:setup}. Extensive details are given in~\cite{setup}.   

The breath-related part of the mentioned setup consists of a head mask spirometer system allowing for the standardized extraction of predefined exhalation segments which -- via a heated and gas tight Teflon transfer line -- are then directly drawn into a Proton-Transfer-Reaction mass spectrometer (PTR-MS, Ionicon Analytik GmbH, Innsbruck, Austria) for online analysis. This analytical technique has proven to be a sensitive method for quantification of volatile molecular species $M$ down to the ppb (parts per billion) range on the basis of ``soft'' chemical ionization within a drift chamber, i.e., by taking advantage of the proton transfer
\[\mathrm{H_3O}^+ + M \to M\mathrm{H}^+ + \mathrm{H_2O}\]
from primary hydronium precursor ions originating in an adjoint hollow cathode~\cite{lindinger1998,lindinger1998_2}. Note that this reaction scheme is selective to VOCs with proton affinities higher than water (166.5~kcal/mol), thereby precluding the ionization of the bulk composition exhaled air, N$_2$, O$_2$ and CO$_2$. Count rates of the resulting product ions $M\mathrm{H}^+$ or fragments thereof appearing at specified mass-to-charge ratios $m/z$ can subsequently be converted to absolute concentrations of the protonated compounds (see~\cite{schwarzfrag} for further details on the quantification of acetone as well as~\cite{setup} for the underlying PTR-MS settings used). The carbon dioxide concentration $\ccd$ of the gas sample is determined by a separate sensor (AirSense Model 400, Digital Control Systems, Portland, USA).

\begin{figure}[H]
\centering
\begin{tabular}{c}
\includegraphics[width=11cm]{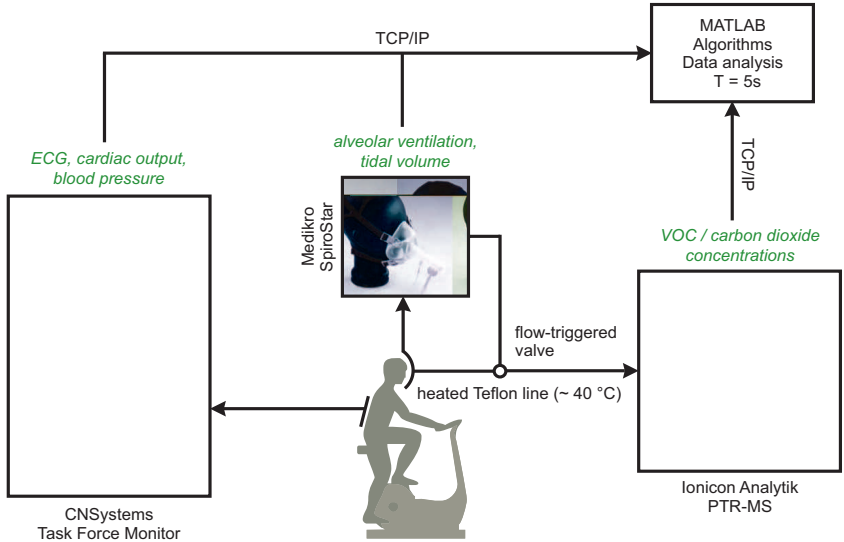}
\end{tabular}
\caption{Experimental setup used for obtaining VOC profiles and medical parameters~\cite{setup}. Items in italic correspond to measurable variables. The selective analysis of predefined breath segments is ensured by flow-triggered sample extraction.}\label{fig:setup}
\end{figure}

In addition to the breath concentration profiles of acetone, it will be of importance for us to have at hand a continuous estimate of the corresponding sample water vapor content $\cw$. As has been put forward in the literature, the water dimer $(\mathrm{H_3O}^+)\mathrm{H_2O}$ can be used for this purpose~\cite{warneke2001,ammann2006}. More specifically, the corresponding pseudo concentration signal at $m/z=37$ calculated according to Equation (1) in~\cite{schwarzfrag} using a standard reaction rate constant of $2.0 \times 10^{-9}$~cm$^3$/s yields a quantity roughly proportional to sample humidity. Slight variations due to fluctuations of the (unknown) amount of water clusters forming in the ion source are assumed to be negligible. Absolute quantification can be achieved by comparison with standards containing predefined humidity levels. Such standards with $\ccd$ and $\cw$ varying over the experimental physiological range of 2~--~8\% and 2~--~6\%, respectively, were prepared using a commercial gas mixing unit (Gaslab, Breitfuss Messtechnik GmbH, Harpstedt, Germany), resulting in a mean calibration factor of $2.1\times 10^{-4}$ and $R^2 \geq 0.98$ for all regressions. It has been argued in~\cite{keck2008} that the aforementioned pseudo concentration can drastically be affected by the carbon dioxide concentration, which, however, could not be confirmed with our PTR-MS settings. Although the computed water content is slightly overestimated with increasing $\ccd$, the sensitivity was found to stay within 10\% of the mean value given above. Nevertheless, we recognize that this approximate method for determining water vapor levels can only serve as a first surrogate for more exact hygrometer measurements. \\
Despite the fact that molecular oxygen is not protonated, the breath oxygen concentration $\coxy$ -- relative to an assumed steady state value of about 100~mmHg in end-tidal (alveolar) air during rest~\cite{lumbbook} -- within one single experiment can still be assessed by monitoring the parasitic precursor ion O$_2^+$ at $m/z=32$. This ion results from a small amount of sample gas entering the ion source with subsequent ionization of O$_2$ under electron impact~\cite{ohara2008}. For normalization purposes, the respective count rates are again converted to pseudo concentrations.
Table~\ref{table:measparams} summarizes the measured quantities relevant for this paper. In general, breath concentrations will always refer to end-tidal levels, except where explicitly noted. Moreover, a typical sampling interval of 5~s is assumed for each variable (corresponding to breath-by-breath extraction of end-tidal VOC levels at a normal breathing rate of 12~tides/min).     

\begin{table}[H]
\centering 
\caption{Summary of measured parameters together with some nominal values during rest and assuming ambient conditions. Breath concentrations refer to end-tidal levels.}\label{table:measparams}
\begin{tabular}{|lcc|}\hline
{\large\strut}Variable&Symbol&Nominal value (units)\\ \hline \hline 
{\large\strut}Cardiac output &$\qc$ & 6 (l/min)~\cite{mohrman2006}\\
{\large\strut}Alveolar ventilation &$\qalv$ & 5.2 (l/min)~\cite{westbook}\\
{\large\strut}Tidal volume &$\vt$ & 0.5 (l)~\cite{westbook}\\
{\large\strut}Acetone concentration &$\cmeas$ & 1 ($\mu$g/l)~\cite{schwarzace}\\
{\large\strut}CO$_2$ content &$\ccd$ & 5.6 (\%)~\cite{lumbbook}\\
{\large\strut}Water content &$\cw$ & 4.7 (\%)~\cite{hanna1986}\\
{\large\strut}O$_2$ content &$\coxy$ & 13.7 (\%)~\cite{lumbbook}\\
\hline
\end{tabular}
\end{table}

%
\section{Acetone modeling}
\label{sect:acemodel}
%
\subsection{Preliminaries and assumptions}
\label{sect:prelim}
Classical pulmonary inert gas elimination theory~\cite{farhi1967} postulates that uptake and removal of VOCs takes place exclusively in the alveolar region. While this is a reasonable assumption for low soluble substances, it has been shown by several authors that exhalation kinetics of VOCs with high affinity for blood and water such as acetone are heavily influenced by relatively quick absorption and release mechanisms occurring in the conductive airways (see, e.g.,~\cite{anderson2003} for a good overview of this topic). More specifically, due to their pronounced hydrophilic characteristics such compounds tend to interact with the water-like mucus membrane lining this part of the respiratory tree, thereby leading to pre- and post-alveolar gas exchange often referred to as wash-in/wash-out behavior.
The present model aims at taking into consideration two major aspects in this framework.

\subsubsection{Bronchial exchange}
\label{sect:broex}
It is now an accepted fact that the bronchial tree plays an important role in overall pulmonary gas exchange of highly (water) soluble trace gases, affecting both endogenous clearance as well as exogenous uptake. For perspective, Anderson et al.~\cite{anderson2006} inferred that while fresh air is being inhaled, it becomes enriched with acetone stored in the airway surface walls of the peripheral bronchial tract, thus leading to a decrease of the acetone pressure/tension gradient between gas phase and capillary blood in the alveolar space. This causes an effective reduction of the driving force for gas exchange in the alveoli and minimizes the unloading of the capillary acetone level. Correspondingly, during exhalation the aforementioned diffusion process is reversed, with a certain amount of acetone being stripped from the air stream and redepositing onto the previously depleted mucus layer. As a phenomenological consequence, exhaled breath concentrations of acetone and other highly water soluble substances tend to be diminished on their way up from the deeper respiratory tract to the airway opening, thereby decreasing overall elimination as compared to purely alveolar extraction. Similarly, exposition studies suggest a pre-alveolar absorption of exogenous acetone during inhalation and a post-alveolar revaporization during expiration, resulting in a lower systemic uptake compared to what would be expected if the exchange occurred completely in the alveoli~\cite{wigaeus1981,kumagai2000,thrall2003}.

From the above, quantitative assessments examining the relationships between the measured breath concentrations and the underlying alveolar levels are complex and need to take into account a variety of factors, such as airway temperature profiles and airway perfusion as well as breathing patterns~\cite{anderson2003,anderson2007}. \\

In accordance with previous modeling approaches, we consider a bronchial compartment separated into a gas phase and a mucus membrane, which is assumed to inherit the physical properties of water and acts as a reservoir for acetone~\cite{kumagai2000,moerk2006}. Part of the acetone dissolved in this layer is transferred to the bronchial circulation, whereby the major fraction of the associated venous drainage is postulated to join the pulmonary veins via the postcapillary anastomoses~\cite{lumbbook}. A study by Morris et al.~\cite{morris2008} on airway perfusion during moderate exercise in humans indicates that the fraction $\qbr \in [0,1)$ of cardiac output $\qc$ contributing to this part of bronchial perfusion will slightly decrease with increasing pulmonary blood flow. According to Figure~3 from their paper and assuming that $\qc^{\mathrm{rest}}=6$~l/min we can derive the heuristic linear model
\begin{equation}\label{eq:qbro}
\qbr(\qc):=\max\{0,\qbr^{\mathrm{rest}}(1-0.06(\qc-\qc^{\mathrm{rest}}))\}.
\end{equation}
The constant $\qbr^{\mathrm{rest}}$ will be estimated in Section~\ref{sect:simest}. As a rough upper bound we propose the initial guess $\qbr^{\mathrm{rest}}= 0.01$~\cite{lumbbook}. We stress the fact that the bronchial compartment just introduced has to be interpreted as an abstract control volume lumping together the decisive sites of airway gas exchange in one homogeneous functional unit. These locations can be expected to vary widely with the solubility of the VOC under scrutiny as well as with physiological boundary conditions~\cite{anderson2003}. 

\subsubsection{Temperature dependence}
\label{sect:temp}
There is strong experimental evidence that airway temperature constitutes a major determinant for the pulmonary exchange of highly soluble VOCs, cf.~\cite{jones1982}. In particular, changes in airway temperature can be expected to affect the solubility of acetone and similar compounds in the mucus surface of the respiratory tree. It will hence be important to specify a tentative relationship capturing such influences on the observable breath levels. As will be rationalized below, this may be achieved by taking into account the absolute humidity of the extracted breath samples.

Passing through the conditioning regions of the upper airways, inhaled air is warmed to a mean body core temperature of approximately 37$\phantom{}^{\circ}$C and fully saturated with water vapor, thus leading to an absolute humidity of alveolar air of about 6.2\% at ambient pressure. During exhalation, depending on the axial temperature gradient between the lower respiratory tract and the airway opening, a certain amount of water vapor will condense out and reduce the water content $\cw$ in the exhalate according to the saturation water vapor pressure $P_{\mathrm{water}}$ (in mbar) determined by local airway temperature $T$ (in $\phantom{}^{\circ}$C)~\cite{mcfadden1985,hanna1986a}. 
The relationship between these two quantities can be approximated by the well-known Magnus formula~\cite{sonntag} 
\begin{equation}\label{eq:magnus}P_{\mathrm{water}}(T)=6.112\,\exp{\lk\frac{17.62\,T}{243.12+T}\rk},\end{equation}
valid for a temperature range $-45^{\circ}\textrm{C} \leq T \leq 60^{\circ}\textrm{C}$. 
For normal physiological values of $T$, the resulting pressure is sufficiently small to treat water vapor as an ideal gas~\cite{raobook} and hence by applying Dalton's law we conclude that absolute humidity $\cw$ (in $\%$) of the exhalate varies according to
\begin{equation}\cw(T)=100\,\frac{P_{\mathrm{water}}(T)}{P_{\mathrm{ambient}}},\end{equation}
where $P_{\mathrm{ambient}}$ is the ambient pressure.
Inverting the above formula, the mini\-mum airway temperature $T_{\mathrm{min}}=T_{\mathrm{min}}(\cw)$ during exhalation becomes a function of measured water content in exhaled breath. From this, a mean airway and mucus temperature characterizing the homogeneous bronchial compartment of the previous section will be defined as
\begin{equation}\label{eq:temp}
\bar{T}(\cw):=\frac{T_{\mathrm{min}}(\cw)+37}{2},
\end{equation}  
corresponding to a hypothesized linear increase of temperature along the airways. Note that this assumption is somehow arbitrary in the sense that the characteristic temperature of the airways should be matched to the primary (time- and solubility-dependent) location of airway gas exchange as mentioned above. Equation~\eqref{eq:temp} thus should only be seen as a simple ad hoc compromise incorporating this variability.\\

The decrease of acetone solubility in the mucosa -- expressed as the water:air partition coefficient $\lmuca$ -- with increasing temperature can be described in the ambient temperature range by a van't Hoff-type equation~\cite{staudinger2001}
\begin{equation}\label{eq:hentemp}\log_{10}\lmuca(T)=-A+\frac{B}{T+273.15},\end{equation}
where $A=3.742$ and $B=1965$~Kelvin are proportional to the entropy and enthalpy of volatilization, respectively. Hence, in a hypothetical situation where the absolute sample humidity at the mouth is 4.7\% (corresponding to a temperature of $T \approx 32^{\circ}\mathrm{C}$ and ambient pressure at sea level, cf.~\cite{mcfadden1985,hanna1986}), local solubility of acetone in the mucus layer increases from $\lmuca(37^{\circ}\mathrm{C}) = 392$ in the lower respiratory tract (cf.~\cite{kumagai1995}) to $\lmuca(32^{\circ}\mathrm{C}) = 498$ at the mouth, thereby predicting a drastic reduction of air stream acetone concentrations along the airways. The above formulations allow one to assess this reduction by taking into account sample water vapor as a meta parameter. This meta parameter reflects various influential factors on the mucus solubility $\lmuca$ which would otherwise be intricate to handle due to a lack of information, such as local airway perfusion, breathing patterns, mucosal hydration and thermoregulatory events which in turn will affect axial temperature profiles. In particular, $\lmuca$ for the entire bronchial compartment will be estimated via the mean airway temperature $\bar{T}$ in Equation~\eqref{eq:temp} as
\begin{equation}\label{eq:meanhen}
\lmuca(\bar{T})=\lmuca(\bar{T}(\cw)).
\end{equation} 
The strong coupling between sample humidity and exhaled breath concentrations predicted by the two relationships~\eqref{eq:magnus} and~\eqref{eq:hentemp} is expected to be a common factor for all highly water soluble VOCs. In the framework of breath alcohol measurements Lindberg et al.~\cite{lindberg2007} indeed showed a positive correlation between these two quantities along the course of exhalation, which can also be observed in the case of acetone, cf.~Fig.~\ref{fig:lin}.

\begin{figure}
\centering
\begin{tabular}{c}
\includegraphics[height=12cm]{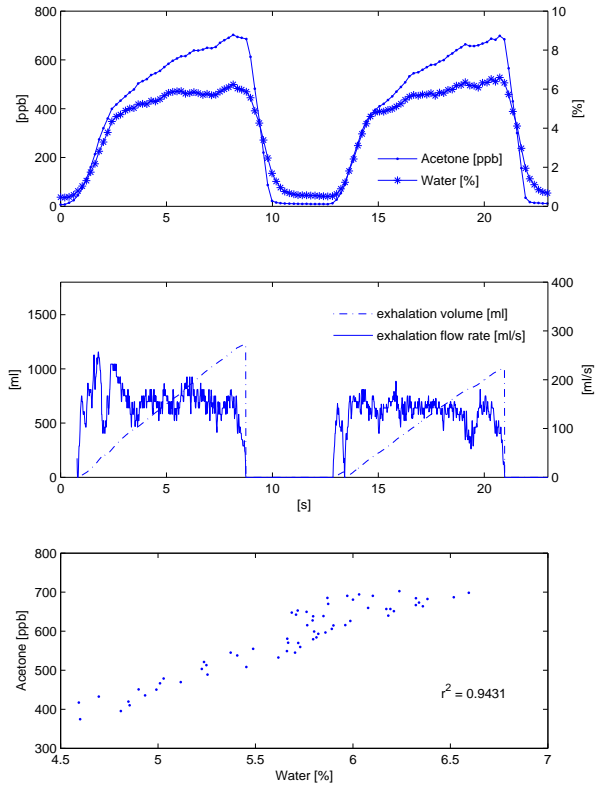}
\end{tabular}
\caption{Correlation between breath acetone concentrations and breath water content during two consecutive exhalations of a normal healthy volunteer at an approximately constant rate of 150~ml/s. Water-acetone pairs in the third panel correspond to the two linear \emph{Phase~3} segments as described in~\cite{anderson2006}. The high sampling frequency in this example is achieved by collecting breath over the entire breath cycle (as opposed to the selective end-tidal extraction regime described in Section~\ref{sect:experimental}) as well as by limiting PTR-MS detection to only three mass-to-charge ratios (with corresponding dwell times given in brackets): $m/z=21$ (5~ms), $m/z=37$ (2~ms) and $m/z=59$ (10~ms).}\label{fig:lin}
\end{figure}

Variations of the acetone blood:air partition coefficient $\hen=340$~\cite{anderson2006,crofford1977} -- dominating alveolar gas exchange -- in response to changes in mixed venous blood temperature, e.g., due to exercise, are ignored as such changes are necessarily small~\cite{brundin1975}. Hence, $\hen$ will always refer to 37$^{\circ}$C. Similarly, the partition coefficient between mucosa and blood is treated as a constant defined by
\begin{equation}\label{eq:lmucb}\lmucb:=\lmuca(37^{\circ}\mathrm{C})/\hen, \end{equation} resulting in a value of 1.15. Note, that if the airway temperature is below 37$\phantom{}^{\circ}$C we always have that
\begin{equation}\label{eq:thermdis}\lmuca/\lmucb \geq \hen,\end{equation}
as $\lmuca$ is monotonically decreasing with increasing temperature, see Equation~\eqref{eq:hentemp}. 

\subsubsection{Bronchio-alveolar interactions}
\label{sect:broalvinter}
In a series of modeling studies~\cite{tsu1988,anderson2003}, the location of gas exchange has been demonstrated to shift between bronchial and alveolar regions depending on the solubility of the compound under investigation. During tidal breathing exchange for substances with blood:air partition coefficient $\hen \leq 10$  takes place almost exclusively in the alveoli, while it appears to be strictly limited to the bronchial tract in the case of $\hen \geq 100$. Transport for VOCs lying within these two extremes distributes between both spaces. Likewise, for fixed $\hen$, the location of gas exchange is expected to vary with 
breathing patterns. As has been concluded by Anderson et al.~\cite{anderson2006}, airway contribution to overall pulmonary exchange of endogenous acetone is about 96\% during tidal breathing, but only 73\% when inhaling to total lung capacity. The rationale for this reduction is that while more proximal parts of the mucosa lining are being depleted earlier in the course of inhalation by losing acetone to the inhalate, saturation of the air stream with acetone is continuously shifted towards the alveolar region. Furthermore, it can be argued that the magnitude of this shift increases with volumetric flow during inhalation as equilibration of fresh air with the mucus layer in regions with high flow rates might not be completed. From a reversed viewpoint this would be consistent with the observation made in~\cite{anderson2006} that end-exhaled acetone partial pressures increase with exhaled flow rate. \\

Such smooth transitions in the location of gas exchange can be incorporated into the model by including a diffusion process describing the interaction between bronchial and alveolar compartment (cf.~Fig.~\ref{fig:model_struct}), which is similar to the strategy found in the theory of stratified or series inhomogeneities developed by Scheid et al.~\cite{scheid1981}. According to the approach presented there, series inhomogeneities stem from the fact that while gas flow in the upper parts of the respiratory tree is primarily dominated by convection, axial diffusion becomes the decisive factor in the terminal airspaces. This can also be thought of as an incomplete mixing of tidal volume with the functional residual capacity, thus leading to the formation of an effective gas diffusion barrier between the proximal and distal parts of the alveolar space. Scheid et al. quantify this effect by means of a (substance-specific) stratified conductance parameter $D$, taking values in the interval $[0,\infty)$. As $D$ approaches zero, the retention (defined as the ratio between steady state partial pressures in arterial and mixed venous blood) of inert gases eliminated from the blood increases, while excretion (the ratio between partial pressures in mixed expired air and mixed venous blood) decreases. Hence, small values of $D$ correspond to a reduced overall gas exchange efficiency of the lungs. It will be shown that reinterpreting the concept of stratified inhomogeneity and the stratified conductance parameter $D$ in the framework of soluble gas exchange allows for a proper description of the bronchio-alveolar interactions discussed above. The particular role of $D$ within the present framework will be clarified in Section~\ref{sect:steadystates}.\\

The alveolar region itself is represented by one single homogeneous alveolar unit, thereby neglecting ventilation-perfusion inequality throughout the lung.
In the case of VOCs with high $\hen$ this constitutes an acceptable simplification, since the classical Farhi equation predicts a minimal influence of local ventilation-perfusion ratios on the corresponding alveolar concentrations. Uptake and elimination of VOCs to and from the bronchio-alveolar tract during inhalation and exhalation is governed by the alveolar ventilation $\qalv$ (defined as the gas volume per time unit filling the alveoli and the exchanging bronchial tubes).

\subsubsection{Body compartments}
\label{sect:bodycomp}
The systemic part of the model has been adapted from previous models~\cite{kumagai1995,moerk2006} and consists of two functional units: a liver compartment, where acetone is endogenously produced and metabolized, as well as a tissue compartment representing an effective storage volume. The latter basically lumps together tissue groups with similar blood:tissue partition coefficient $\ltis \approx 1.38$, such as richly perfused tissue, muscles and skin~\cite{anderson2006,moerk2006}. Due to the low fractional perfusion of adipose tissue and its low affinity for acetone, an extra fat compartment is not considered. The fractional blood flow $\qliv \in (0,1)$ to the liver is assumed to be related to total cardiac output by
\begin{equation}\qliv(\qc):=0.034+1.145\exp{(-1.387\qc/\qc^{\mathrm{rest}})},\end{equation}
obtained by exponential fitting of the data given in~\cite{moerk2006}.
The rate $\dot{M}$ of acetone metabolism is assumed to obey the Michaelis-Menten (saturation) kinetics
\begin{equation}\label{eq:metmenten}\dot{M}=\frac{\vmax\cliv\lliv}{\km+\cliv\lliv},\end{equation}
with $\vmax,\,\km>0,$ or the linear kinetics
\begin{equation}\label{eq:ratemet}\dot{M}=\klin\bw^{0.75}\cliv\lliv=:\met\cliv\lliv,\end{equation}
where ``$\bw$'' denotes the body weight in~kg. The rate constant $\klin$ can be obtained by linearization of~\eqref{eq:metmenten}, i.e., 
\begin{equation}\klin:=\vmax/\km \approx 0.0037\;\mathrm{l}/\mathrm{kg}^{0.75}/\mathrm{min},\end{equation}
according to the values given in~\cite{kumagai1995}.

\begin{remark}
Taking into account nominal mixed venous blood concentrations of approximately 1~mg/l for healthy volunteers as well as an apparent Michaelis-Menten constant $\km = 84$~mg/l~\cite{kumagai1995}, linear kinetics will be sufficient for describing most situations encountered in practice. Typical exceptions include, e.g., severe diabetic ketoacidosis or starvation ketosis, where plasma concentrations up to $500$~mg/l have been reported~\cite{owen1982,reichard1979} and hence metabolism can be expected to reach saturation.
\end{remark}
Other ways of acetone clearance such as excretion via the renal system are neglected, cf.~\cite{wigaeus1981}.
Endogenous synthesis of acetone in the liver is assumed to occur at some rate $\pr>0$ depending on current lipolysis~\cite{kalapos2003}. In particular, fat catabolism is considered a long-term mechanism compared to the other dynamics of the system, so that $\pr$ can in fact be assumed constant during the course of experiments presented here (less than 2 hours).

\subsection{Model equations and a priori analysis}
\subsubsection{Derivation}
In order to capture the gas exchange and tissue distribution mechanisms presented above, the model consists of four different compartments.
A sketch of the model structure is given in Fig.~\ref{fig:model_struct} and will be detailed in following. 

Model equations are derived by taking into account standard conservation of mass laws for the individual compartments, see Appendix~\ref{sect:compmod}.
Local diffusion equilibria are assumed to hold at the air-tissue, tissue-blood and air-blood interfaces, the ratio of the corresponding concentrations being described by the appropriate partition coefficients, e.g., $\hen$. Unlike for low blood soluble compounds, the amount of highly soluble gas dissolved in local blood volume of perfused compartments cannot generally be neglected, as it might significantly increase the corresponding capacities. This is particularly true for the airspace compartments. Since reliable data for some local blood volumes could not be found, in order not to overload the model with too many hypothetical parameters, we will use the effective compartment volumes $\vbro:=V_{\mathrm{bro}}+V_{\mathrm{muc}}\lmuca$, 
$\valv:=V_{\mathrm{A}}+V_{\mathrm{c'}}\hen$, $\vliv:=V_{\mathrm{liv}}+V_{\mathrm{liv,b}}\lliv$ as well as $\vtis:=V_{\mathrm{tis}}$ and neglect blood volumes for the mucosal and tissue compartment, see also Appendix~\ref{sect:compmod}.

\begin{figure}[H]
\centering
\begin{picture}(7,10)

\put(0.7,9.8){$C_\mathrm{I}$}
\put(0.8,9.7){\vector(0,-1){0.95}}
\put(1.4,9.8){$C_\mathrm{bro}$}
\put(1.5,8.75){\vector(0,1){0.95}}
\put(1,9.2){$\qalv$}

\put(3.9,8.15){\vector(-1,0){0.75}}
\put(3.9,8.15){\line(0,-1){6.8}}

\put(4.8,8.1){\parbox{1cm}{bronchial\\compartment}}
\put(4,8){\rotatebox{270}{$q_\mathrm{bro} \dot{Q}_\mathrm{c}$}}
\put(0.5,8){\fbox{\rule[-3ex]{0pt}{8ex}\parbox{1cm}{\vspace{-1mm}\centering $C_\mathrm{bro}$\\[2mm]$V_\mathrm{bro}$} \quad \parbox{1cm}{\vspace{-1mm}\centering $C_\mathrm{muc}$\\[2mm]$V_\mathrm{muc}$} }}
\multiput(1.8,8.7)(0,-0.2){6}{\line(0,-1){0.1}}

\put(1.3,7.05){$D$}
\put(1.15,7.5){\vector(0,-1){0.75}}
\put(1.15,6.75){\vector(0,1){0.75}}
\put(2.5,7.5){\line(0,-1){0.75}}
\put(2.5,7.125){\vector(1,0){1.4}}

\put(4.8,6.1){\parbox{1cm}{alveolar\\compartment}}
\put(4,6.5){\rotatebox{270}{$(1-q_\mathrm{bro}) \dot{Q}_\mathrm{c}$}}
\put(0.5,6){\fbox{\rule[-3ex]{0pt}{8ex}\parbox{1cm}{\vspace{-1mm}\centering $C_\mathrm{A}$\\[2mm]$V_\mathrm{A}$} \quad \parbox{1cm}{\vspace{-1mm}\centering $C_\mathrm{c'}$\\[2mm]$V_\mathrm{c'}$} }}
\multiput(1.8,6.7)(0,-0.2){6}{\line(0,-1){0.1}}

\put(4.8,3.65){\parbox{1cm}{liver\\compartment}}
\put(4,4.025){\rotatebox{270}{$(1-q_\mathrm{liv})(1-q_\mathrm{bro}) \dot{Q}_\mathrm{c}$}}
\put(0.5,3.75){\fbox{\rule[-9ex]{0pt}{16ex}\parbox{1cm}{\vspace{1mm}\centering  $C_\mathrm{liv,b}$\\[2mm]$V_\mathrm{liv,b}$\\[4mm] $C_\mathrm{liv}$\\[2mm]$V_\mathrm{liv}$} }}
\multiput(0.6,3.75)(0.2,0){6}{\line(1,0){0.1}}

\put(2,4.5){$q_\mathrm{liv} (1-q_\mathrm{bro}) \dot{Q}_\mathrm{c}$}
\put(3.9,4.35){\vector(-1,0){2.1}}

\put(2.85,3.3){$k_\mathrm{met}$}
\put(1.8,3.4){\vector(1,0){0.95}}
\put(2.85,2.7){$k_\mathrm{pr}$}
\put(2.75,2.8){\vector(-1,0){0.95}}

\put(4.8,1.25){\parbox{1cm}{tissue\\compartment}}
\put(0.5,1.25){\fbox{\rule[-3ex]{0pt}{8ex}\parbox{1cm}{\vspace{-1mm}\centering $C_\mathrm{tis}$\\[2mm]$V_\mathrm{tis}$} }}

\put(3.9,1.35){\vector(-1,0){2.1}}
\put(0.5,1.35){\vector(-1,0){0.25}}
\put(0.25,1.35){\line(0,1){3.9}}
\put(0.5,4.35){\vector(-1,0){0.25}}
\put(0.25,5.25){\line(1,0){2.25}}
\put(2.5,5.25){\vector(0,1){0.25}}

\end{picture}
\caption{Sketch of the model structure. The body is divided into four distinct functional units: bronchial/mucosal compartment (gas exchange), alveolar/end-capillary compartment (gas exchange), liver (metabolism and production) and tissue (storage). Dashed boundaries indicate a diffusion equilibrium. The conductance parameter $D$ has units of volume divided by time and quantifies an effective diffusion barrier between the bronchial and the alveolar tract, cf.~Section~\ref{sect:steadystates}.}\label{fig:model_struct}
\end{figure}
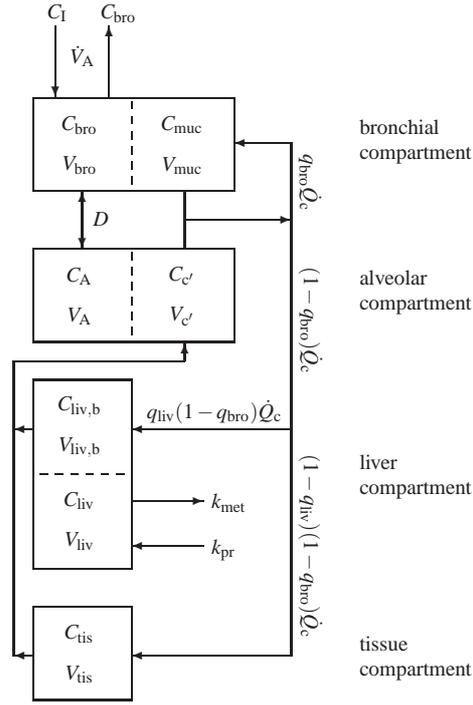

According to Fig.~\ref{fig:model_struct} as well as by taking into account the discussion of the previous subsections, for the bronchial compartment we find that
\begin{equation}\label{eq:bro}
\frac{\di\cbr}{\di t}\vbro=\qalv(\cinh-\cbr)+D(\calv-\cbr)\\+\qbr\qc\lk\cart-\frac{\lmuca}{\lmucb}\cbr\rk,
\end{equation}
with $\cinh$ denoting the inhaled (ambient) gas concentration, while the mass balance equations for the alveolar, liver and tissue compartment read
\begin{equation}\label{eq:alv}
\frac{\di\calv}{\di t}\valv=D(\cbr-\calv)\\+(1-\qbr)\qc\big(\cven -\hen\calv\big),
\end{equation} 
and
\begin{equation}\label{eq:liv}
\frac{\di\cliv}{\di t}\vliv=\pr-\met\lliv\cliv\\+\qliv(1-\qbr)\qc\big(\cart-\lliv\cliv\big),
\end{equation}
and
\begin{equation}\label{eq:tis}
\frac{\di\ctis}{\di t}\vtis=(1-\qliv)(1-\qbr)\qc\big(\cart-\ltis\ctis\big),
\end{equation}
respectively. Here,
\begin{equation}\label{eq:cven}\cven:=\qliv\lliv\cliv+(1-\qliv)\ltis\ctis\end{equation}
and
\begin{equation}\label{eq:cart}\cart:=(1-\qbr)\hen\calv+\qbr\lmuca\cbr/\lmucb\end{equation}
are the associated concentrations in mixed venous and arterial blood, respectively.
Moreover, we state that measured (end-tidal) breath concentrations equal bronchial levels, i.e.,
\begin{equation}\label{eq:meas}\cmeas=\cbr.\end{equation}
The decoupled case $D=\qbr=0$ will be excluded further on in this paper as it lacks physiological relevance.\\

Some fundamental model properties are discussed in Appendix~\ref{sect:apriori}. In particular, the components of the state variable $\cb:=(\cbr,\calv,\cliv,\ctis)^T$ remain non-negative, bounded and will approach a globally asymptotically stable equilibrium $\cb^e(\ub)$ once the measurable external inputs $\ub:=(\qalv,\qc,\vt,\lmuca(\cw),\cinh)$ affecting the system are fixed. This corresponds, e.g., to the situation encountered during rest or constant workload.

\subsubsection{Steady state relationships and interpretation of the stratified conductance parameter $D$}\label{sect:steadystates}
Assume that the system is in steady state and we know the associated end-tidal breath concentration $\cmeas=\cbr$.   
Furthermore, we define the bronchial and alveolar ventilation-perfusion ratio as
\[r_{\mathrm{bro}}:=\frac{\qalv}{\qbr\qc}\quad \textrm{and}\quad r_{\mathrm{A}}:=\frac{\qalv}{(1-\qbr)\qc},\]
respectively.
If $D=0$ we deduce that
\begin{multline}\label{eq:Dzero}\cmeas=\cbr=\\\frac{r_{\mathrm{bro}}\cinh+(1-\qbr)\hen\calv}{(1-\qbr)\frac{\lmuca}{\lmucb}+r_{\mathrm{bro}}}= \frac{r_{\mathrm{bro}}\cinh+(1-\qbr)\cven}{(1-\qbr)\frac{\lmuca}{\lmucb}+r_{\mathrm{bro}}}=\frac{r_{\mathrm{bro}}\cinh+\cart}{\frac{\lmuca}{\lmucb}+r_{\mathrm{bro}}},\end{multline}
corresponding to purely bronchial gas exchange. On the other hand, for $D\to \infty$ it can be shown by simple algebra that
\begin{multline}\label{eq:Dinfty}\cmeas=\cbr=\\\calv= \frac{r_{\mathrm{A}}\cinh +\cven}{\qbr\frac{\lmuca}{\lmucb}+(1-\qbr)\hen+r_{\mathrm{A}}}=\frac{\cart}{\qbr\frac{\lmuca}{\lmucb}+(1-\qbr)\hen}.\end{multline}
In the following, let $\cinh=0$. Note that then in both cases the physiological boundary condition $\cven \geq \cart$ is respected. Substituting $\qbr=0$ into~\eqref{eq:Dinfty} yields the well-known Farhi equations describing purely alveolar gas exchange. \\
Consequently, $D$ defines the location of gas exchange in accordance with Section~\ref{sect:broalvinter}, while $\qbr$ determines to what extent the bronchial compartment acts as an inert tube. 
In this sense, Equations~\eqref{eq:bro}--\eqref{eq:tis} define a generalized description of gas exchange including several known models as special cases. This hierarchy is summarized in Fig.~\ref{fig:model_hier}.

\begin{figure}[H]
\centering
\begin{picture}(6.9,5)
\put(2.3,4.3){\fbox{\quad\rule[-1.55ex]{0pt}{4.2ex} model \quad}}
\put(2.5,3.9){\vector(-1,-1){0.8}}
\put(3.8,3.9){\vector(1,-1){0.8}}
\put(0.8,3.5){$\qbr=0$}
\put(4.5,3.5){$\qbr>0$}

\put(0.1,2.5){\fbox{\parbox{2.6cm}{\centering stratified\\ inhomogeneity}}}
\put(3.1,2.5){\fbox{\parbox{3cm}{\centering location dependent\\ gas exchange}}}
\put(1.5,2){\vector(0,-1){0.9}}
\put(3.9,2){\vector(0,-1){0.9}}
\put(5.6,2){\vector(0,-1){0.9}}
\put(1.7,1.5){$D \to \infty$}
\put(4.1,1.5){$D \to 0$}
\put(5.8,1.5){$D \to \infty$}

\put(0.1,0.5){\fbox{\parbox{2.6cm}{\centering Farhi inert gas\\ elimination}}}
\put(3.1,0.5){\fbox{\rule[-1.55ex]{0pt}{4.2ex}\parbox{1.3cm}{\centering bronchial}}}
\put(4.8,0.5){\fbox{\rule[-1.55ex]{0pt}{4.2ex}\parbox{1.3cm}{\centering alveolar}}}

\end{picture}
\caption{Equations~\eqref{eq:bro}--\eqref{eq:tis} viewed as generalized model including several gas exchange mechanisms as special cases ($\cinh=0$).}\label{fig:model_hier}
\end{figure}
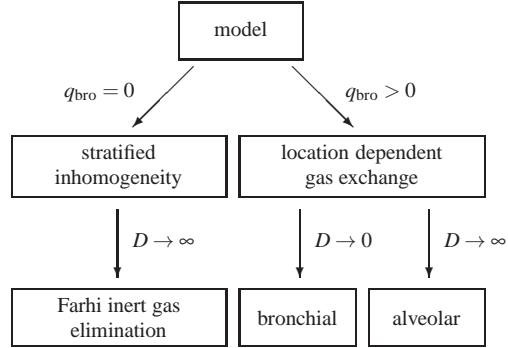

For perspective -- as has been rationalized in the case of acetone in Section~\ref{sect:broalvinter} and is likely to be a common characteristic for all highly water soluble VOCs -- the stratified conductance parameter $D$ will be close to zero during rest and is expected to increase with tidal volume and/or flow rate~\cite{tsu1991,anderson2003}. We propose to model this dependency as
\begin{equation}\label{eq:D}
D:=D^{\mathrm{rest}}+\dke \max\{0,\vt-\vt^{\mathrm{rest}}\}+\dkz  \max\{0,\qalv-\qalv^{\mathrm{rest}}\},\;k_{\mathrm{diff},j} \geq 0,
\end{equation}
which will further be justified in Section~\ref{sect:ergo}.

From a practical point of view, we stress the fact that if $D$ is close to zero, calculating blood levels from $\cmeas$ by using Equation~\eqref{eq:Dzero} or taking advantage of this expression for normalization and/or correction purposes is critical due to the possibly large influence of the term $r_{\mathrm{bro}}$ as well as due to the high degree of uncertainty with respect to $\qbr$ and $\lmuca$. 
Particularly, from the aforementioned facts this means that measured breath concentrations of acetone (and generally highly water soluble substances) determined during resting conditions and free breathing can be rather misleading indicators for endogenous levels, even if sampling occurs under well defined standard conditions (for instance -- as is common practice -- using CO$_2$- and/or flow-controlled extraction from the end-tidal exhalation segment). This has first been recognized in the context of breath alcohol measurements revealing experimentally obtained blood-breath concentration ratios of ethanol during tidal breathing that are unexpectedly high compared with in vitro partition coefficients~\cite{jones1983}. An elegant approach proposed to circumvent this problem is isothermal rebreathing~\cite{jones1983,ohlsson1990}, which aims at creating conditions where (cf.~Equation~\eqref{eq:Dzero})
\begin{equation}\cmeas=\cinh=\calv=\cbr = \frac{r_{\mathrm{bro}}\cbr+(1-\qbr)\cven}{(1-\qbr)\hen+r_{\mathrm{bro}}}=\frac{\cven}{\hen}=\frac{\cart}{\hen},\end{equation}
from which the mixed venous blood concentration can easily be determined by multiplying the measured rebreathing concentration with the anticipated blood:air partition coefficient at body temperature. A quantitative evaluation of this experimental technique by means of the above model can be found in the technical report~\cite{King2010b}.

%
\section{Model validation and estimation}
\label{sect:simest}
%
\subsection{A priori identifiability}\label{sect:identif}
One of the purposes of our tentative model is to provide a basis for estimating (unknown) compartment concentrations $\cb$ as well as certain acetone-specific parameters $p_j \in \{\pr,\met,\vmax,\km,D,\qbr^{\mathrm{rest}}\}$ from the knowledge of measured breath and blood concentrations $y$. Often over-looked, a necessary requirement in this framework is the a priori (or structural) identifiability/observability of the model, which basically checks whether in an ideal context of an error-free model and continuous, noise-free measurements there exist functions $\ub$ (or, in other words, conductible experiments) such that the associated output $y$ (the accessible data) carries enough information to enable an unambiguous determination of all unknown states and parameters. Particularly, this avoids an inherent over-parameterization of the model. As the time evolution of the system~\eqref{eq:bro}--\eqref{eq:tis} for a given $\ub$ is fixed once the initial conditions $\cb_0$ at the start of the experiment are known, the analysis of a priori identifiability/observability hence amounts to studying (local) injectivity of $y$ with respect to $\cb_0$ and the parameters $p_j$ under scrutiny.
Evidently, if such a property does not hold then any attempt to reliably estimate these quantities from $y$ is doomed to failure from the start, as two entirely different parameter combinations can yield exactly the same data.
In the present context, \emph{generic} a priori identifiability/observability for $\cb_0$ and any selection of parameters $p_j$ as above was confirmed by interpreting the latter as additional states with time derivative zero and subjecting the augmented system to the Hermann-Krener rank criterion~\cite{krener} (see Appendix~\ref{sect:krener}). In particular, the sufficient condition in~\eqref{eq:defJ} was fulfilled for both $y=\cbr$ and $y=\cart$ as given in Equations~\eqref{eq:meas} and~\eqref{eq:cart}, respectively.  

\subsection{Simulation of exposure data and model calibration}\label{sect:wigaeus}
Generally speaking, in vivo data on acetone \emph{dynamics} in the human organism are very limited. Indeed, experimental efforts to date have centered on quantifying bioaccumulation/biotransformation as well as body burden within occupational exposure settings. The study by Wigaeus et al.~\cite{wigaeus1981} represents the most extensive research in this context (including breath as well as simultaneous blood measurements), thus rendering it as a convenient benchmark for confirming the appropriateness of models involving descriptions of systemic acetone distribution, cf.~\cite{kumagai1995,moerk2006}.

In the following we will calibrate the unknown kinetic rate constants, the fractional bronchial blood flow, as well as the equilibrium tissue levels $\cb_0$ at rest, by subjecting the proposed model to the \emph{Series~1} exposure scenario published in~\cite{wigaeus1981}. Briefly, this data set comprises acetone concentration profiles in exhaled breath, arterial and (presumably peripheral) venous blood of eight normal male volunteers, who were exposed to an atmosphere containing $1.3$~mg/l of acetone over a period of two hours. Particularly, all measurements correspond to resting conditions. 

Since the major goal here is to extract approximate nominal values for the above-mentioned physiological parameters rather than individual estimates, the model will be fitted to the \emph{pooled} data of all eight test subjects. For this purpose, we assume that the exposure starts after ten minutes of quiet tidal breathing, i.e., the inhaled concentration is given by the scaled indicator function $\cinh(t)=1.3\,\chi_{[10,130]}(t)$ and set alveolar ventilation and cardiac output to constant resting values $\qalv^{\mathrm{rest}}=6 $~l/min and $\qc^{\mathrm{rest}}=5.8$~l/min, respectively~\cite{moerk2006,kumagai1995}. Tissue volumes and partition coefficients are as in Table~\ref{table:param} for a male of height 180~cm and weight 70~kg. 
Since the nominal acetone concentration in hepatic venous blood is a priori unknown, acetone metabolism is assumed to follow a Michaelis-Menten kinetics with a fixed apparent Michaelis constant $\km=84$~mg/l (cf.~Equation~\eqref{eq:metmenten}).
Our aim is to determine the parameter vector $\tb=(\vmax,\pr,D^{\mathrm{rest}},\qbr^{\mathrm{rest}})$ as well as the nominal endogenous steady state levels $\cb_0$ by solving the ordinary least squares problem
\begin{equation}\label{eq:OLS}\underset{{\cb_0,\tb}}{\mathrm{arg\,min}} \sum_{i=0}^N \big(C_{\mathrm{a},i}-\cart(t_i)\big)^2,\quad \mathrm{s.t.} \left\{\begin{array}{ll}
\gb(\ub_0,\cb_0,\tb)=0 & \textrm{(steady state)}\\
\cb_0,\tb \geq 0 & \textrm{(positivity)}\\
\qbr^{\mathrm{rest}} \leq 0.05 & \textrm{(normalization)}\\
\cart(0)=1\;\textrm{mg/l} & \textrm{(endog. arterial level)} \end{array} \right.\end{equation} 
Here, $\gb$ is the right-hand side of the ODE system~\eqref{eq:bro}--\eqref{eq:tis}, $C_{\mathrm{a},i}$ is the measured arterial blood concentration at time instant $t_i$, whereas the predicted arterial concentration $\cart$ is defined as in Equation~\eqref{eq:cart}. The endogenous level $\cart(0)=1\;\textrm{mg/l}$ was chosen in accordance with the population mean values given in~\cite{kalapos2003,wigaeus1981}. 

The above minimization problem was solved by implementing a multiple shooting routine~\cite{bock1987} in \textit{Matlab}. This iterative method can be seen as a generalization of the standard Gauss-Newton algorithm for solving constrained ordinary least squares problems, treating~\eqref{eq:OLS} in the framework of multipoint boundary value problems (see also~\cite{stoerbook} for an early illustration). Further details regarding the general scope of multiple shooting as well as its superior stability compared with classical solution schemes can be found in~\cite{bock1981,bock1987,pfeifer2007,voss2004}. For a variety of applications and modifications proposed for covering PDEs and delay differential equations the interested reader is referred to~\cite{muller2004,horbelt2002}.

Derivatives of $\cart$ with respect to $(\cb_0,\tb)$ were computed by simultaneously solving the associated variational equations~\cite{wannerbook}. The minimization procedure was repeated several times with randomly assigned starting values in the interval $(0,1)$. Convergence was assumed to be achieved when the maximum componentwise relative change between two successive parameter refinements was less than $0.1\%$, resulting in the same estimates for all trials considered, cf.~Table~\ref{table:fitwigaeus}. Figure~\ref{fig:wigaeus} shows that the calibrated model can faithfully reproduce the basic features of the observed data.

\begin{remark}\label{rem:ident}
The practical identifiability of the estimates in Table~\ref{table:fitwigaeus} was examined by calculating the rank of the extended Jacobian $J:=\begin{pmatrix}S^T & Z^T\end{pmatrix}$, where $S$ is the sensitivity function matrix having rows
\begin{equation}
S_{i,-}:=\begin{pmatrix}\frac{\partial C_{\mathrm{a}}(t_{i-1})}{\partial \tb} & & \frac{\partial C_{\mathrm{a}}(t_{i-1})}{\partial \cb_0}\end{pmatrix},
\end{equation}
and $Z$ denotes the Jacobian associated with the equality constraints in~\eqref{eq:OLS}.
More specifically, we adopted the standard \emph{numerical} rank criterion
\begin{equation}
\mathrm{rank}\;J=\max\{k;\;\sigma_k >\varepsilon \norm{J}_{\infty}\},
\end{equation}
where $\sigma_1\geq \sigma_2 \geq \ldots \geq 0$ are the singular values of $J$ and $\varepsilon = 10^{-8}$ reflects the maximum relative error of the calculated sensitivities~\cite{golubbook}. Accordingly, we find that $J$ has full rank, suggesting that all estimated quantities are practically identifiable~\cite{cobelli1980,jac1990}.
\end{remark}

\begin{figure}[H]
\centering
\begin{tabular}{c}
\hspace*{-0.8cm}
\includegraphics[height=9cm]{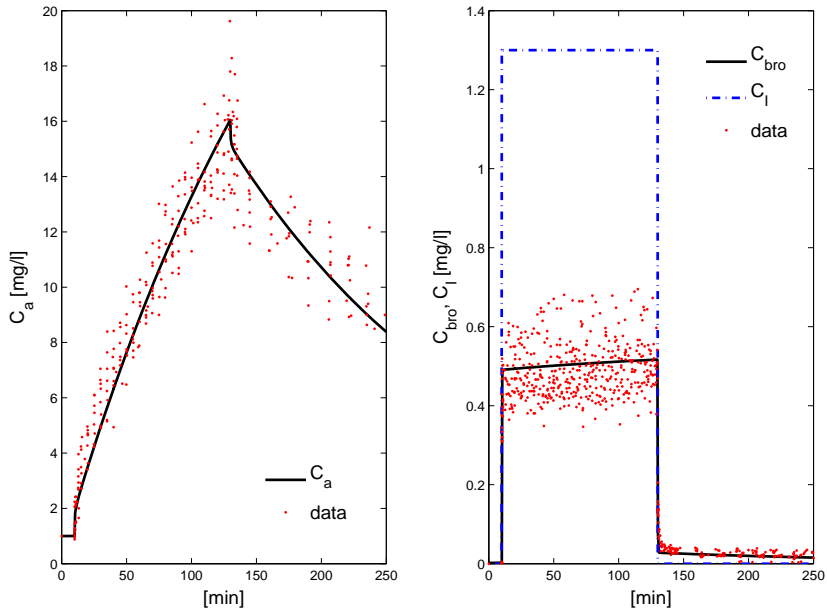}
\end{tabular}
\caption{Model fitted to the \emph{Series~1} exposure data published by Wigaeus et al.~\cite{wigaeus1981}. Data correspond to \emph{pooled} observations from eight normal healthy volunteers. First panel: observed versus predicted arterial concentrations, cf.~Equation~\eqref{eq:cart}. Second panel: observed versus predicted breath concentrations, cf.~Equation~\eqref{eq:meas}.}\label{fig:wigaeus}
\end{figure}

\begin{table}[H]
\centering 
\caption{Fitted parameter values according to Fig.~\ref{fig:wigaeus}.}\label{table:fitwigaeus}
\begin{tabular}{|l|ccccc|}\hline
 {\large\strut}&{\large\strut}$\vmax$&{\large\strut}$\pr$&{\large\strut}$D^{\mathrm{rest}}$&{\large\strut}$\qbr^{\mathrm{rest}}$&{\large\strut}$(\cbr,\calv,\cliv,\ctis)$ \\ 
{\large\strut}&{\large\strut}(mg/min/kg$^{\textrm{0.75}}$)&{\large\strut}(mg/min)&{\large\strut}(l/min)&{\large\strut}(\%)&{\large\strut}(mg/l)  \\ 
 \hline \hline 
{\large\strut}Anticipated &$0.31$ &$\leq 2$& $0$&$1$&$(0.0011,\ast,\ast,\ast)$ \\
{\large\strut}Optimized &$0.62$ &$0.19$& $0$&$0.43$&$(0.0016,0.0029,0.58,0.72)$ \\
\hline
\end{tabular}
\end{table}

Anticipated (literature) values of $\cb_0$ and $\tb$ compare favorably to those obtained by the aforementioned minimization, thus consolidating the physiological plausibility of the extracted estimates. In particular, the fitted value of the gas exchange location parameter $D^{\mathrm{rest}}=0$ agrees well with what is expected from the theoretical discussion in Section~\ref{sect:steadystates}.
The rate at which endogenous acetone forms in healthy adults as a result of normal fat catabolism is not known. However, from the study on acetone metabolism in lean and obese humans during starvation ketosis published by Reichard et al.~\cite{reichard1979} 
a rather conservative upper bound for $\pr$ can be derived to lie in the range of $2$~mg/min.
A diminished fractional bronchial perfusion $\qbr^{\mathrm{rest}}$ might reflect the fact that the assumption of diffusion equilibrium between the mucosa lining and the deeper vascularized sections of the airway wall is rather stringent, so that actually less acetone is transported away from the peripheral bronchial tract via the bloodstream. On the other hand $\qbr^{\mathrm{rest}}=0.01$ refers to the entire bronchial circulation rather than only the part contributing to anatomic right-to-left shunt as discussed in Section~\ref{sect:broex}. The fitted steady state value for the measured breath concentration $\cmeas(0)=\cbr(0)=0.0016$ mg/l during normal breathing at rest is only slightly higher than the observed levels spreading around $500$~ppb (corresponding to about 0.0011~mg/l)~\cite{schwarzace,setup}.\\

\begin{remark}\label{rem:farhiwig}
Fig.~\ref{fig:wigaeus} clearly illustrates the necessity of taking into account the conducting airways as an additional compartment for exchange of highly soluble substances. In particular, the observed data profiles are in sharp contrast to the classical Farhi inert tube description, predicting arterial blood concentrations to be directly proportional to (alveolar) breath concentrations. Contrarily, by taking into account pre-alveolar uptake as discussed in Section~\ref{sect:broex}, the accumulation of exogenous acetone in the systemic circulation is expected to be delayed due to the small contribution of bronchial blood flow to overall perfusion.
\end{remark}  

\begin{remark}
The population spread of the fitted parameters within the study cohort could be assessed, e.g., by a Bayesian~\cite{moerk2009} or mixed effects approach~\cite{kuhn2005}, which, however would be beyond the scope of this paper. Here, the major aim rather is to demonstrate the flexibility of the model in covering a wide spectrum of different experimental scenarios.
\end{remark} 

From an operational perspective, one may consider the results stated in the previous paragraphs as a model tuning procedure for resting conditions. Accordingly, in the remaining part of this paper both the stratified conductance parameter and the fractional bronchial blood flow during rest will be frozen at their fitted values $D^{\mathrm{rest}}=0$ and $\qbr^{\mathrm{rest}}=0.0043$, respectively. Moreover, due to the high value of the apparent Michaelis constant $\km$ as compared to the estimated acetone concentration $\lliv\cliv$ in hepatic venous blood, acetone clearance from the liver can safely be postulated to obey linear kinetics in the sequel, cf.~Equation~\eqref{eq:ratemet}. In particular, the extracted value of $\klin=\vmax/\km =0.0074$~l/kg$^{\textrm{0.75}}$/min is interpreted as an intrinsic property of acetone metabolism (with inter-individual variation being introduced by multiplication with the 0.75 power of body weight). Hence, $\met$ as in Equation~\eqref{eq:ratemet} will be treated as a known constant parameter in the sequel.

\subsection{Ergometer data sets and comparative evaluations}
\label{sect:ergo}

As has been indicated in the introduction, a primary motivation for this work was to develop a model elucidating the features of breath acetone behavior observed during moderate workload ergometer challenges. A representative profile corresponding to one \emph{single} normal healthy volunteer is shown in Fig.~\ref{fig:compfig}, cf.~\cite{setup}. As has been demonstrated there, end-tidal acetone levels in response to exercise generally resemble the profile of alveolar ventilation and inhalation volume, showing abrupt increases and drops in the range of 10~--~40\% at the onsets and stops of the individual workload periods, respectively (see also Fig.~\ref{fig:compfig}~(a), first panel). 
Similarly, a series of auxiliary experiments carried out by means of the same instrumental setup revealed that increasing tidal volume during rest results in increased breath acetone concentrations, while increasing respiratory frequency has a less pronounced impact, cf.~Fig.~\ref{fig:compfig}~(b). Both effects appear to support the hypothesis of Section~\ref{sect:broalvinter} that acetone exchange is strongly influenced by volume and speed of inhalation and hence suggest that any model not incorporating this mechanism will fail to reproduce the above results.
In particular, note that from the viewpoint of the classical Farhi description~\eqref{eq:farhi} the profile in Fig.~\ref{fig:compfig} is rather counter-intuitive. For instance, during hyperventilation acetone supply from the bloodstream will stay roughly constant, while a drastic increase in ventilation should enhance the dilution of alveolar air and would therefore be expected to (slightly) \emph{decrease} the corresponding breath concentration. 

From the above, we view exercise and hyperventilation scenarios as an interesting control setting for testing the physiological adequacy of the newly developed model. Moreover, these two scenarios provide a novel framework for contrasting the predictive power of distinct mechanistic frameworks put forward in the literature. Specifically, in the following we shall use the individual acetone behavior as depicted in Fig.~\ref{fig:compfig} as the basis for comparing the performance of Equations~\eqref{eq:bro}--\eqref{eq:tis} with the standard formulation due to Farhi
as well as with the physiological compartment model introduced by M\"{o}rk et al.~\cite{moerk2006}, which can be viewed as the current state of the art in acetone pharmacokinetic modeling. In particular, the latter model represents a first improvement over the Farhi description in that it has been found capable of adequately reproducing the bioaccumulation behavior presented in the previous subsection. M\"{o}rk et al.~also discuss several shortcomings of earlier models proposed in this context.

While the Farhi description is just a special case our model (by setting $\qbr=0$ and $D \to \infty$, cf.~Fig.~\ref{fig:model_hier}), the model of M\"{o}rk et al. will be re-implemented in slightly modified form by replacing the mass balance Equations~\eqref{eq:bro}--\eqref{eq:alv} of the respiratory tract with Equations (4)--(7) in~\cite{moerk2006}. The body compartments remain unchanged. Additional parameter values are taken from~\cite{moerk2006}. 

Letting $\cinh \equiv 0$ and $\cbr(0)=1.3$ $\mu$g/l, the initial compartment concentrations for all three models as well as the endogenous production rate $\pr$ can uniquely be determined by solving the associated steady state equations at $t_0=0$. This completely specifies the models by Farhi and M\"{o}rk et al..   
Contrarily, the response for the present model is computed by solving the ordinary least squares problem
\begin{equation}\label{eq:OLSexerc}\underset{{\ab}}{\mathrm{arg\,min}} \sum_{i=0}^N \big(y_i-\cbr(t_i)\big)^2,\quad \mathrm{s.t.}\;
\ab \geq 0 \quad \textrm{(positivity)}
\end{equation}
within the time interval $[t_0,t_N]=[0,t_{\mathrm{max}}]$. Here, $y_i=C_{\mathrm{measured},i}$ is the observed acetone concentration at time instant $t_i$ and $\alpha=(\alpha_1,\ldots,\alpha_{m-1})$ is the coefficient vector for the piecewise linear function
\begin{equation}\label{eq:Dhat}
\hat{D}(t):=\sum_{j=1}^{m-1} \alpha_j S_j(t),\quad S_j(t):=\left\{\begin{array}{ll}
\frac{t-s_{j-1}}{s_j-s_{j-1}}& t \in [s_{j-1},s_j]\\
\frac{s_{j+1}-t}{s_{j+1}-s_j}& t \in [s_j,s_{j+1}]\\
0 & \textrm{otherwise,} \end{array} \right.
\end{equation} 
used for deriving an approximation of the time-varying stratified conductance parameter $D \in [0,\infty)$ on $m$ subintervals covering the time span $[t_0=s_0,t_{\mathrm{max}}=s_m]$. Here the nodes $s_j$ are chosen to result in an equidistant partition of about $0.5$~min, cf.~Fig.~\ref{fig:compfig}, third panel.
For simulation purposes the measured physiological functions are converted to input function handles $\ub$ by applying a local smoothing procedure to the associated data and interpolating the resulting profiles with splines. The aforementioned optimization problem was solved as described in Section~\ref{sect:wigaeus}. Fig.~\ref{fig:compfig} summarizes the results of these calculations.

The visually good fit can formally be assessed by residual analysis. Plots of the resulting residuals versus time and versus model predictions clearly exhibit random patterns, suggesting that the assumptions of i.i.d., homoscedastic additive measurement errors underlying ordinary least squares methodology are reasonable~\cite{banksbook}. Furthermore, no statistically significant autocorrelation among the residuals or cross-correlation between the residuals and the measured inputs could be detected, indicating that the model has picked up the decisive dynamics underlying the data.

\begin{figure}[H]
\vspace{-0.5cm}
\centering
\begin{tabular}{c}
\hspace*{-1.2cm}
\includegraphics[height=14cm]{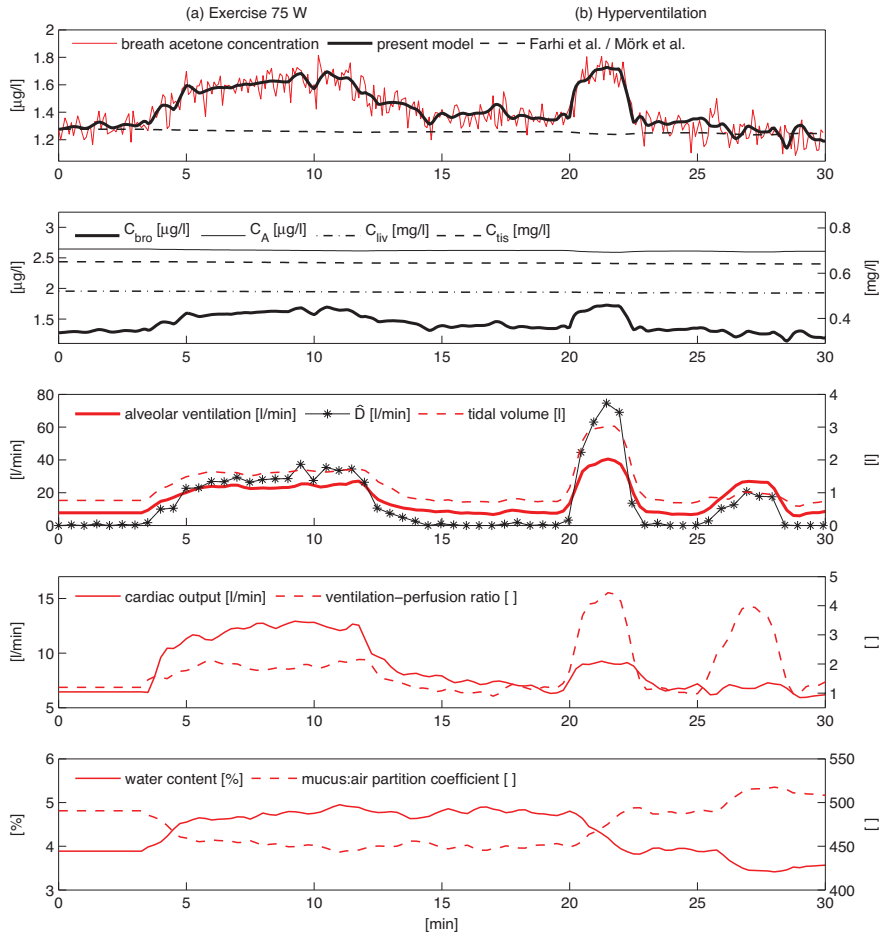}
\end{tabular}
\caption{Typical profile of end-exhaled acetone concentrations in response to the following physiological regime: rest (0-3~min), ergometer challenge at 75~W (3-12~min), rest (12-20~min), hyperventilation with increased tidal volume (20-22~min), rest (22-25~min), high-frequency hyperventilation (25-28~min), rest (28-30~min). Data correspond to one representative healthy male volunteer from the study cohort in~\cite{setup}. Measured or derived quantities according to the experimental setup are indicated in red, while black tracings correspond to simulated variables. \emph{Panel~1}: Measured acetone concentrations in end-tidal breath and associated model predictions; \emph{Panel~2}: Simulated compartment concentrations according to Equations~\eqref{eq:bro}--\eqref{eq:tis}; \emph{Panel~3--5}: Measured physiological parameters according to Table~\ref{table:measparams}. The mucus:air partition coefficient $\lmuca$ and the time-varying stratified conductance parameter $\hat{D}$ are derived from Equation~\eqref{eq:meanhen} and~\eqref{eq:Dhat}, respectively.}\label{fig:compfig}
\end{figure}

While all three models will describe the steady state at rest, only Equations~\eqref{eq:bro}--\eqref{eq:tis} tolerably capture the entire observable dynamical behavior. Particularly, the predictions resulting from the models due to Farhi and M\"{o}rk et al.~are almost indistinguishable within this experimental regime and both will depart only slightly from the initial equilibrium states. In contrast, the newly proposed model indeed appears to incorporate the decisive physiological mechanisms underlying the step-shaped dynamics of breath acetone concentrations in response to constant load exercise and hyperventilation. Correspondingly, according to the second panel in Fig.~\ref{fig:compfig}, measured (i.e., bronchial) levels during rest and tidal breathing markedly differ from the alveolar ones due to the effective diffusion barrier between the two spaces represented by a value of $D^{\mathrm{rest}} \approx 0$, cf.~Section~\ref{sect:wigaeus}. As soon as tidal volume and/or respiratory frequency are increased, this barrier will vanish to some extent according to the processes discussed in Section~\ref{sect:steadystates}, thus causing $\cbr=\cmeas$ to approach a value closer to the alveolar concentration $\calv$. 
In contrast, $\calv$ itself as well as the breath acetone profiles simulated by means of the other two models remain relatively constant. This is consistent with the behavior expected from the Farhi formulation, predicting a minimal sensitivity of the alveolar concentrations with respect to fluctuations of the ventilation-perfusion ratio for highly soluble trace gases. In other words, the major part of short-term variability observable in breath acetone concentrations during free breathing can be attributed to airway gas exchange, with minute changes in the underlying blood and tissue concentrations. In particular, note that with the present model the observed acetone dynamics can be captured by assuming a constant endogenous production rate $\pr \approx 0.17$~mg/min. The above-mentioned reasoning appears to agree with previous observations in the literature, where the excretion of acetone has been demonstrated to increase during moderate exercise~\cite{schrikker1989}.

\begin{remark}The high degree of interplay between $\hat{D}$ and $(\qalv,\vt)$ discernible in the third panel of Fig.~\ref{fig:compfig} suggests that the time-dependency of $D$ can essentially be captured using these two respiratory variables. Different heuristic relationships might be investigated in this context, see Equation~\eqref{eq:D} for instance. Correspondingly, by repeating the optimization procedure in~\eqref{eq:OLSexerc} with $\ab$ being replaced by $(\dke,\dkz)$ we find that the associated model response again is in good agreement with the observed data. The practical identifiability of these two parameters at their optimized values $\dke=14.9$ and $\dkz=0.76$ was confirmed along the lines of Remark~\ref{rem:ident}. Hence, the parameterization in Equation~\eqref{eq:D} might be used to reduce the originally infinite dimensional estimation problem for $D$ to two degrees of freedom.
\end{remark}

While in Section~\ref{sect:identif} it has been confirmed that the model is structurally locally observable, we stress the fact that data corresponding to exercise or hyperventilation tests as presented above will usually not allow for a joint \emph{numerical} estimation of some of the above variables. For perspective, it should be clear from the approximately constant profile of $\cliv$ in Fig.~\ref{fig:compfig} that $\pr$ and $\met$ can hardly be assessed simultaneously, as both values are coupled via a single steady state equation associated with Equation~\eqref{eq:liv}. Using similar reasoning, smaller values of $\qbr^{\mathrm{rest}}$ (leading to larger steady state concentrations $\calv(0)$) can be compensated for by smaller constants $k_{\mathrm{diff},j}$, giving rise to an almost identical model output. Again, such a situation might formally be investigated by means of practical identifiability techniques (cf.~Appendix~\ref{sect:krener}), e.g., by calculating pairwise correlation coefficients between the sensitivities of the model output with respect to the parameters under scrutiny. In the present context, values of these indices associated with $(\met,\pr)$ as well as $(\dke,\qbr^{\mathrm{rest}})$ and $(\dkz,\qbr^{\mathrm{rest}})$ are close to $1$ or $-1$, thereby revealing a substantial degree of collinearity between the corresponding sensitivities and providing strong indications that a proper estimation of any of these pairs on the basis of moderate exercise challenges as above cannot be anticipated. Such identifiability issues can only be circumvented by designing (multi-)experimental regimes guaranteeing a sufficiently large and independent influence of all parameters to be estimated. 

\section{Discussion and critical remarks}

The main intention of this section is to critically review and clarify some of the assumptions underlying the model derivation as well as to indicate some potential improvements of the present formulation.

\paragraph*{Diffusion equilibrium in the bronchial tract.\;} While it is commonly agreed upon that the transport of inert gases between the alveolar space and end-capillary blood is perfusion-limited~\cite{wagner2008}, a similar premise in the case of airway gas exchange remains less certain. Experimental and theoretical evidence appears to favor the view that although a diffusion equilibrium might be attained at the air-mucus interface~\cite{tsu1988,kumagai1999}, the bronchial epithelial tissue can constitute an effective diffusion barrier between the mucus lining and the bronchial circulation. The magnitude of this diffusional resistance is probably substance-specific, with an inverse relation to molecular weight~\cite{swenson1992}. In this sense, the fractional bronchial blood flow $\qbr$ has to be interpreted as the effective perfusion of those mucosal tissue layers for which an instantaneous equilibrium with air can be achieved~\cite{kumagai2000}. Specifically, $\qbr$ might differ for distinct compounds.
While animal models seem to support the assumption of a complete equilibration between air stream and bronchial circulation in the special case of acetone~\cite{swenson1992}, it might be necessary to include a diffusion limitation between these two compartments in order to extend the validity of the model over a wider range of highly water soluble VOCs. The statistical significance of such generalizations can then be assessed by employing residual based comparison techniques for nested models as described in~\cite{banksbook,banks1990}. However, at the current stage of research and given the limited data on the behavior of breath trace gases having similar physico-chemical characteristics like acetone, we prefer to maintain a parameterization as parsimonious as possible. 

\paragraph*{Continuous ventilation and temperature dependence.\;}

Due to the tidal nature of breathing, bronchial as well as alveolar gas concentrations will vary throughout the breathing cycle, following a roughly periodical pattern during normal breathing. These variations can be captured by considering two separate mass balance systems, describing the dynamics of the associated concentrations for each inhalation and exhalation phase, respectively~\cite{martonen1982,kumagai2000,anderson2003}. While such microscopic formulations are of paramount importance for resolving events within one individual respiratory cycle, when looking at the mid- to long-term behavior of breath VOCs we are rather interested in the global dynamics of the averaged compartmental concentrations. This approach leads to models of continuous ventilation with a unidirectional gas stream and has the enormous operational advantage of reducing the model structure to one single mass balance system. 

Variable temperature distributions in the bronchial compartment are represented by a single mean temperature $\bar{T}$ according to Equation~\eqref{eq:temp}, thereby lumping together our ignorance regarding the exact temperature profile along the airways. Different functional relationships might be investigated here.\\ 

In summary, this paper introduces a novel compartmental description of pulmonary gas exchange and systemic distribution of blood-borne, highly blood and water soluble VOCs, which faithfully captures experimentally determined end-tidal acetone profiles for normal healthy subjects during free breathing in distinct physiological states. Particularly, the model has been tested in the framework of external exposure as well as exercise scenarios and illuminates the discrepancies between observed and theoretically predicted blood-breath ratios of acetone during resting conditions, i.e., in steady state. In this sense, the present formulation provides a good compartmental perspective of acetone exhalation kinetics and is expected to contribute to a better understanding of distribution, transport, biotransformation and excretion processes of acetone in different functional units of the organism as well as their impact on the observed breath concentration.

While we are well aware of the fact that the number of data sets used for model validation is relatively small, exposure and exercise regimes currently are the only published experimental settings covering non-steady state acetone behavior in humans. Further validation will thus have to await additional experimental efforts. In this context, preliminary tests conducted with the intention of extending the range of applicability for the presented formulation to the framework of isothermal rebreathing show promising results~\cite{King2010b}.
 
The emphasis of this work has been laid on deriving a sound mathematical formulation flexible enough to cover a wide spectrum of possible VOC behavior, while simultaneously maintaining physiological plausibility as well as a clear-cut interpretation of the involved parameters. 
Care has been taken to keep the parameterization as parsimonious as possible, thereby constructing a novel ``minimal'' model respecting fundamental physiological boundary conditions, such as boundedness of the associated trajectories and the existence of a globally asymptotically stable equilibrium state. While a complete sensitivity and practical identifiability analysis was beyond the scope of this paper and has to be matched to the particular experimental framework in which the model will be used, general concepts from structural and practical identifiability have been exploited in order to provide some indications regarding the information content of the observable breath level with respect to the endogenous situation.

\begin{acknowledgements}
We are indebted to the reviewers for several helpful suggestions.
Julian King is a recipient of a DOC fellowship at the
Breath Research Institute of the Austrian Academy of Sciences.
The research leading to these results has received funding
from the European Communitys Seventh Framework Programme
(FP7/2007-13) under grant agreement No.~217967.
We appreciate funding from the Austrian Federal Ministry
for Transport, Innovation and Technology (BMVIT/BMWA,
project~818803, KIRAS). Gerald Teschl and Julian King acknowledge support from the Austrian Science Fund (FWF) under
Grant No.~Y330. We greatly appreciate the generous
support of the government of Vorarlberg and its governor Landeshauptmann
Dr. Herbert Sausgruber.
\end{acknowledgements}


\appendix
\section{Physical preliminaries}
The following paragraphs briefly summarize the fundamental physical principles underlying the present model derivation. Further discussion largely follows standard textbooks on thermodynamics and biological mass transport, see~\cite{truskeybook} for instance.
\subsection{Diffusion and Henry's Law}
Fick's first law of diffusion states that the net flux $J$ quantifying diffusional transport of a gas species $M$ between two spatially homogeneous control volumes (i.e., compartments in which $M$ can be postulated to behave uniformly) $A$ and $A'$ is proportional to the associated difference in partial pressure $P_M$, i.e.,
\begin{equation}\label{eq:fick}J=k_{\mathrm{diff}}(P_{M,A}-P_{M,A'}),\end{equation}
where the non-negative diffusion constant $k_{\mathrm{diff}}$ has dimensions of amount (in mol) divided by pressure and time. 
If the gas under scrutiny is in dissolved state rather than in gas phase, $P_M$ is referred to as gas tension,
defined as the partial pressure that would be exerted by the gas above the liquid if both were in equilibrium (e.g., oxygen tension in blood). In the following we will omit the subscript $M$. Equation~\eqref{eq:fick} can also be expressed via a difference in concentrations.
We will distinguish the following three cases:

(i). Firstly, if $A$ as well as $A'$ represent a gaseous medium and $M$ can be treated as an ideal gas (which will be a general premise in the following) then -- by the ideal gas law -- its partial pressure $P$ can be written as
\begin{equation}\label{eq:gaslaw}P_A=C_A R T,\end{equation}
and analogously for $P_{A'}$. Here, $C_A$ is the molecular concentration in $A$ while $R$ and $T$ denote the gas constant and absolute temperature, respectively.
Hence,
\begin{equation}J=k_{\mathrm{diff}}(C_{A}RT-C_{A'}RT)=\tilde{k}_{\mathrm{diff}}(C_{A}-C_{A'}),\end{equation}
where $\tilde{k}_{\mathrm{diff}}$ is defined by the above equation.

(ii). Similarly, if both $A$ and $A'$ are liquids, we can make use of Henry's law stating that the amount of a gas $M$ that can dissolve in a liquid under equilibrium conditions is directly proportional to the respective tension, viz.,
\begin{equation}\label{eq:henry}C=HP.\end{equation}
Here the solute- and solvent-specific Henry constant $H=H(T)$ is inversely related to temperature.
Substituting the last expression into Equation~\eqref{eq:fick} yields
\begin{equation}\label{eq:liquid}J=k_{\mathrm{diff}}\lk\frac{C_{A}}{H_A}-\frac{C_{A'}}{H_{A'}}\rk.\end{equation}

(iii). Finally, let us assume that $A'$ represents a liquid, while $A$ is a gas volume. The prototypic example for this situation in the present context is the blood-gas interface separating the respiratory microvasculature and the alveoli. In this case by combining~\eqref{eq:gaslaw} and~\eqref{eq:henry} we arrive at
\begin{equation}\label{eq:gas}J=k_{\mathrm{diff}}RT\lk C_{A}-\frac{C_{A'}}{H_{A'}RT}\rk=\tilde{k}_{\mathrm{diff}}\lk C_{A}-\frac{C_{A'}}{H_{A'}RT}\rk.\end{equation}
In case of a diffusion equilibrium, i.e., $J=0$, we deduce from Equation~\eqref{eq:liquid} that 
\begin{equation}\label{eq:partcoef}\frac{C_{A'}}{C_{A}}=\frac{H_{A'}}{H_{A}}=:\lambda_{A':A},\end{equation}
where the positive dimensionless quantity $\lambda_{A':A}$ is the so-called partition coefficient between $A'$ and $A$ (e.g., $\lambda_{\mathrm{blood:fat}}$). Analogously, its reciprocal value is denoted by $\lambda_{A:A'}$.
Similarly, from~\eqref{eq:gas} the ratio between the liquid phase concentration and gas phase concentration of $M$ in equilibrium is given by
\begin{equation}\frac{C_{A'}}{C_{A}}=H_{A'}RT.\end{equation}
The expression $H_{A'}RT$ is again denoted by $\lambda_{A':A}$ (e.g., $\lambda_{\mathrm{blood:air}}$) and represents a measure of the solubility of the gas
$M$ in the solvent $A'$. Hence, $\lambda_{A':A}$ allows for a classification of $M$ as low or highly soluble with respect to $A'$.

\subsection{Compartmental mass transport}\label{sect:compmod}
As above, consider two homogeneous compartments $A$ and $A'$ with volumes $V_A$ and $V_{A'}$, respectively.
Gas dynamics within these two compartments are governed by conservation of mass, stating that the rate at which the 
compartmental amount (viz., $V_A C_A$) of $M$ changes per time unit equals the rate of mass transfer into the compartment less the rate of removal from the compartment. For instance, the process of diffusion between $A$ and $A'$ as discussed before can be seen as a special case of this scheme with $J$ (expressed in terms of concentrations) combining both diffusional inflow and outflow. Hence, if $A$ and $A'$ interact by diffusion and if both are affected by additional input rates $\dot{I}$ and effluent output rates $\dot{O}$ (which might also comprise possible sources or sinks, respectively), then, assuming that the respective compartment volumes remain constant over time, the associated mass balances read
\begin{equation}\label{eq:diff1}V_A\frac{\di C_A}{\di t}=-J+\dot{I}_A-\dot{O}_A\end{equation}
and
\begin{equation}\label{eq:diff2} V_{A'}\frac{\di C_{A'}}{\di t}=J+\dot{I}_{A'}-\dot{O}_{A'},\end{equation}
respectively. 

If diffusion is fast compared to the other dynamics influencing the system, $C_A$ as well as $C_{A'}$ will instantaneously approach their steady state values and a permanent diffusion equilibrium can be assumed to hold between $A$ and $A'$. 
In this case we may replace $C_{A'}$ in~\eqref{eq:diff2} with $C_A\lambda_{A':A}$ (cf.~Equation~\eqref{eq:partcoef}) and deduce that the ODE system above simplifies to the one-dimensional differential equation
\begin{equation}\label{eq:diffcomb} \big(V_{A'}\lambda_{A':A}+V_{A}\big)\frac{\di C_{A}}{\di t}=\dot{I}_{A'}-\dot{O}_{A'}+\dot{I}_A-\dot{O}_A.\end{equation}
The factor $\tilde{V}_A:=(V_{A'}\lambda_{A':A}+V_{A})$ can be interpreted as ``effective'' volume of the combined compartment.
 
Equation~\eqref{eq:diffcomb} characterizes a \emph{perfusion-limited} mass transfer between $A$ and $A'$. For perspective, there is strong evidence that in normal healthy persons the alveolar exchange if inert gases is a perfusion-limited process. The branching capillary network of the respiratory microvasculature will generally promote a fast equilibration between end-capillary blood $A'$ and free gas phase $A$. This is likely to hold true even under moderate exercise conditions~\cite{wagner2008}. From a modeling perspective, a commonly encountered error in this context is to neglect the contribution of $V_{A'}\lambda_{A':A}$ to the effective alveolar volume, as end-capillary blood volume $V_{A'}\approx 0.15$~l is argued to be much smaller than the normal lung volume $V_{A}\approx 3$~l. Note, however, that this is only acceptable for low soluble inert gases with blood:air partition coefficient less than 1.
Similarly, for most tissue groups, owing to their high degree of vascularization the intracellular space $A$ can be assumed to rapidly equilibrate with the extracellular space $A'$ (including the vascular blood and the interstitial space). In this case, a venous equilibrium is said to hold, i.e., the concentration $C_{A'}$ of $M$ in blood leaving the tissue group is related to the actual tissue concentration $C_A$ via Equation~\eqref{eq:partcoef}.\\

Note that the assumption of well-mixing is central to compartmentalization, and can often only be justified heuristically, either by examining factors that contribute to rapid distribution (e.g., convection) or by considering small volumes. If heterogeneity within a compartment is expected to be substantial or the main focus is on the spatial distribution of $M$, general mass transport equations leading to PDEs may be employed~\cite{truskeybook}.

\section{Some fundamental model properties}\label{sect:apriori}
Equations~\eqref{eq:bro}--\eqref{eq:tis} can be written as a time-varying linear inhomogeneous ODE system
\begin{equation}\label{eq:sysc}
\dot{\cb}=A(\ub,\tb)\cb+\bb(\ub,\tb)=:\gb(\ub,\cb,\tb)
\end{equation}
in the state variable $\cb:=(\cbr,\calv,\cliv,\ctis)^T$, which is dependent on a time-inde\-pendent parameter vector $\tb$ as well as on a vector
$\ub:=(\qalv,\qc,\vt,\lmuca(\cw),\cinh)$ of bounded, non-negative functions lumping together all measured variables. The associated measurement equation reads
\begin{equation}y=\cmeas=(1,0,0,0)\,\cb=:h(\cb).\end{equation}
In the present section we are going to discuss some qualitative properties of the system presented.
Firstly, as the system is linear, for any given initial condition $\cb(0)$ there is a unique global solution.
Moreover, $A$ is a Metzler matrix and $\bb \geq 0$. Consequently $c_i=0$ implies that $\dot{c}_i \geq 0$ for every component and it follows that the trajectories remain non-negative.
Thus,~\eqref{eq:sysc} constitutes a positive system with the state $\cb$ evolving within the positive orthant $\R^n_{>0}:=\{\cb\,|\,\cb \in \R^n,\;c_i>0,\;i=1,\ldots,n\}$, where $n=4$. 

\begin{proposition}
All solutions of~\eqref{eq:sysc} starting in $\R^n_{>0}$ remain bounded.
\end{proposition}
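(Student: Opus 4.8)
The plan is to build a linear Lyapunov function $V(\cb)=\vb^{T}\cb$ from the positive, Metzler structure already established, and to derive a dissipation inequality of the form $\dot V\le-\alpha V+\beta$ with $\alpha>0$ and $\beta<\infty$. Since the components of $\cb$ stay non-negative (as shown above) and $\vb\gg0$, such an inequality yields the uniform a priori bound $V(t)\le\max\{V(0),\beta/\alpha\}$ and hence boundedness of every trajectory. I would first test the physically natural weight, namely the vector of effective compartment volumes $\vb=(\vbro,\valv,\vliv,\vtis)^{T}$, for which $V$ is the total amount $M$ of substance stored in the body. Summing the four balances \eqref{eq:bro}--\eqref{eq:tis} and inserting \eqref{eq:cven} and \eqref{eq:cart}, one checks that every internal flux cancels: the two $D$-terms between the bronchial and alveolar compartments are equal and opposite, while the blood-borne terms collapse because the arterial coefficients reconstitute exactly $\qc\cart$ through the identity $\qbr+(1-\qbr)(\qliv+(1-\qliv))=1$, and this is annihilated by the uptake terms $\qbr\qc\frac{\lmuca}{\lmucb}\cbr$ and $(1-\qbr)\qc\hen\calv$, whereas $(1-\qbr)\qc\cven$ is reabsorbed by the liver and tissue drainage. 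Only the genuine exchanges with the exterior survive, giving the transparent balance $\dot M=\qalv\cinh+\pr-\qalv\cbr-\met\lliv\cliv$.

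Because $\qalv$ and $\cinh$ are bounded and $\pr$ is constant, the first two terms are dominated by some $\beta<\infty$, while the last two are $\le0$ by non-negativity of $\cb$; thus $\dot M\le\beta$. This alone only gives linear growth, however, and the obstruction is structural: the dissipative terms act solely through $\cbr$ (ventilation) and $\cliv$ (metabolism) and exert no direct control on $\calv$ and $\ctis$. The volume weight therefore does not produce a genuine $-\alpha V$ term.

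To close this gap I would replace the volume weights by a slightly tilted strictly positive vector $\vb\gg0$ satisfying $\vb^{T}A(\ub)\le-\alpha\,\vb^{T}$ uniformly over the admissible inputs. Such a $\vb$ exists because each frozen matrix $A(\ub)$ is Hurwitz --- this is exactly the globally asymptotically stable equilibrium asserted for fixed inputs --- and a Hurwitz Metzler matrix always admits a strictly positive left vector with $\vb^{T}A<0$ componentwise. The mechanism can be seen directly: the feedback loop carrying $\ctis$ into $\calv$ through $\cven$ and back through $\cart$ has gain $\frac{(1-\qbr)^{2}\qc(1-\qliv)\hen}{D+(1-\qbr)\qc\hen}<1$, since $(1-\qbr)(1-\qliv)<1$, so the indirect feedback into $\calv$ and $\ctis$ cannot sustain growth. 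Perturbing the volume weights in the direction dictated by this computation creates strictly negative entries in the $\calv$- and $\ctis$-slots of $\vb^{T}A$ while preserving the negativity already present in the $\cbr$- and $\cliv$-slots. With such a $\vb$, the estimate $\dot V=\vb^{T}A\cb+\vb^{T}\bb\le-\alpha V+\beta$ follows from $\cb\ge0$ and boundedness of $\bb$, and a standard comparison argument completes the proof.

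The step I expect to be the main obstacle is the last one: producing a \emph{single} weight $\vb$ and a single rate $\alpha>0$ valid simultaneously for all values of the time-varying inputs, rather than merely pointwise in $\ub$. Frozen-coefficient Hurwitzness does not by itself imply uniform stability of a time-varying linear system, so I would deliberately avoid arguing through the state-transition matrix and instead rely on the common linear Lyapunov function. The uniform $\alpha$ would be extracted from the continuity of $\ub\mapsto\vb^{T}A(\ub)$ over the compact, physiologically admissible parameter set (with $\qalv$ bounded away from zero) together with the fact that the defining inequality is an open condition. The flux cancellations leading to the mass balance are routine once the partition-coefficient bookkeeping is in place, so the genuine work is confined to verifying the tilted weight.
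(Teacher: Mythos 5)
Your opening computation is sound and coincides with the paper's own starting point: summing the balances \eqref{eq:bro}--\eqref{eq:tis} with the effective-volume weights gives precisely \eqref{eq:massderiv}, and your diagnosis that this weight dissipates only through $\cbr$ and $\cliv$ is correct. But the two proofs then diverge, and the divergence is where your argument breaks. The paper keeps the volume weights and closes the gap with positivity plus a contradiction argument (unboundedness of $\cliv$ is incompatible with $m\geq 0$ and \eqref{eq:massderiv}), then repeats the trick with the partial mass $\tilde{m}:=\vbro\cbr+\valv\calv+\vtis\ctis$ to bound the remaining components; it never needs strict dissipation in every slot. You instead need a single $\vb\gg 0$ and a single $\alpha>0$ with $\vb^{T}A(\ub)\leq-\alpha\,\vb^{T}$ for \emph{all} admissible inputs, and your justification for that step does not work: pointwise Hurwitzness of the Metzler matrices $A(\ub)$, continuity, compactness and the openness of the inequality only yield, for each $\ub_0$, a vector $\vb(\ub_0)$ valid on a neighborhood of $\ub_0$. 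A finite subcover then produces finitely many weights with no way to merge them --- sums of such vectors do not inherit the property, since $\vb_i^{T}A(\ub)$ may have positive entries off the $i$-th patch. This is exactly the switched-positive-systems phenomenon: a compact family of Hurwitz Metzler matrices need not admit any common linear copositive Lyapunov function, so existence here must be proven by explicit construction exploiting the structure of $A(\ub)$, which your proposal does not attempt (the loop-gain remark covers only the single $2$-cycle $\ctis\leftrightarrow\calv$, not the loops through $\cbr$ and $\cliv$, and in particular not the interplay between the $D$-terms in the first two columns).

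Two further points. First, the route is salvageable: writing $v_i=w_i\tilde{V}_i$, the choice $w_1=w_2=1$, $w_4=1+\varepsilon$, with $w_3$ in a suitable fixed subinterval of $(0,1)$, does make all four entries of $\vb^{T}A(\ub)$ uniformly negative --- the $D$-dependent contributions to columns $1$ and $2$ cancel exactly when $w_1=w_2$, the term $\met>0$ creates uniform slack in the liver column, and small $\varepsilon$ preserves the alveolar column --- but verifying these inequalities \emph{is} the proof, and it is absent from your write-up. Second, even once completed, your argument establishes less than the stated proposition: it requires the inputs to stay in a compact set on which $A(\ub)$ is uniformly Hurwitz (e.g., $\qalv$ and $\qc$ bounded away from zero), whereas the proposition allows arbitrary bounded non-negative input functions. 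In the degenerate regimes (say $\qc=0$, or two of $\qalv$, $\qliv$, $\met$ vanishing) the paper's determinant formula shows $A(\ub)$ is singular, so no strictly dissipative $\vb$ can exist there, yet the trajectories remain bounded; the paper's positivity-plus-mass argument covers these cases without extra hypotheses, while yours would need a separate treatment.
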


\begin{proof}
This can be shown by considering the total mass $m:=\sum \tilde{V}_i c_i \geq 0$ and noting that 
\begin{equation}\label{eq:massderiv}\dot{m}=\pr-\met\lliv\cliv+\qalv(\cinh-\cbr).\end{equation}
Taking into account positivity of the solutions and the involved parameters this shows that $\cliv$ is bounded from above for bounded $\cinh$, since the assumption that $\cliv$ is unbounded yields a contradiction. Analogously, $\cbr$ and $\calv$ can be shown to be bounded by considering $\tilde{m}:=\vbro\cbr+\valv\calv+\vtis\ctis$ and similarly for $\ctis$. 
\end{proof}

Furthermore, it can be proven that under physiological steady state conditions, i.e., for constant $\ub$, the above system has a globally asymptotically stable equilibrium $\cb^e:=-A^{-1} \bb$. To this end it suffices to show that the time-invariant matrix $A$ is Hurwitz, i.e., all real parts of the associated eigenvalues are negative. 

\begin{proposition}
Suppose $\ub$ is time-independent. Then the real parts of all eigenvalues of $A(\ub,\tb)$ are non-positive. They
are strictly negative if $\det(A(\ub,\tb))\ne 0$.

Moreover, $\det(A(\ub,\tb)) = 0$ if and only if either $\qc$ vanishes or two of the quantities $\qalv, \qliv, \met$ vanish.
\end{proposition}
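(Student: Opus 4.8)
The plan is to read both assertions off the scaled system matrix. Writing $\vbro,\valv,\vliv,\vtis>0$ for the effective volumes, one has $A=\mathrm{diag}(\vbro,\valv,\vliv,\vtis)^{-1}M$, where $M$ is obtained from the right-hand sides of~\eqref{eq:bro}--\eqref{eq:tis} after inserting $\cven$ and $\cart$ from~\eqref{eq:cven}--\eqref{eq:cart}; since the volumes are positive, $\det A=0$ if and only if $\det M=0$. I would establish the eigenvalue bound from the compartmental (mass-conservation) structure of $A$, and the determinant criterion by an organised expansion of $M$.

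For the spectral statement, I would first record the conservation identity coming from the homogeneous part of~\eqref{eq:massderiv}, namely $(\vbro,\valv,\vliv,\vtis)\,A=(-\qalv,0,-\met\lliv,0)$, so that the column sums of $M$ are $(-\qalv,0,-\met\lliv,0)\le 0$. I would then pass to the similar matrix $\tilde A:=M\,\mathrm{diag}(\vbro,\valv,\vliv,\vtis)^{-1}$, which shares the spectrum and the determinant of $A$ and is Metzler with (inherited) non-positive column sums. Gershgorin's theorem applied to $\tilde A^{T}$ places every eigenvalue in a disc centred at a non-positive diagonal entry whose radius is at most the absolute value of that entry, whence $\mathrm{Re}\,\lambda\le 0$ for all eigenvalues of $A$. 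For the strict statement I would note that $-\tilde A$ has non-positive off-diagonal entries and, by the bound just obtained, only eigenvalues with non-negative real part, so it is an M-matrix; if $\det A\ne 0$ it is a \emph{nonsingular} M-matrix, hence positive stable, and every eigenvalue of $A$ then has strictly negative real part.

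The determinant criterion is the computational core. Exploiting the column-sum identity, I would replace the first row of $M$ by the sum of all four rows, leaving $\det M$ unchanged and producing the sparse row $(-\qalv,0,-\met\lliv,0)$. Factoring $\lliv$ out of the third column and observing that the fourth column equals $(1-\qliv)(1-\qbr)\qc\ltis\,(0,1,0,-1)^{T}$, I would peel off the overall factor $\lliv(1-\qliv)(1-\qbr)\qc\ltis$, in which every factor except the explicit $\qc$ is strictly positive (recall $\qbr\in[0,1)$, $\qliv\in(0,1)$). A cofactor expansion along the fourth column reduces the remainder to a difference of two $3\times 3$ determinants, each of which I would expand along its sparse first row $(-\qalv,0,-\met)$. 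Collecting the resulting monomials and simplifying via $1-(1-\qbr)=\qbr$ yields a polynomial $P$ whose decisive feature is that \emph{each} monomial contains at least two of the three quantities $\qalv,\qliv,\met$, carried by a non-negative coefficient assembled from $\qc,\,1-\qbr,\,\hen,\,\lmuca/\lmucb,\,D$ and $\qbr$.

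The equivalence then follows from this structure. If two of $\qalv,\qliv,\met$ vanish, every monomial of $P$ vanishes, so $\det M=0$; together with the extracted factor $\qc$ this settles the easy direction. Conversely, suppose $\qc\ne 0$ and $\det M=0$, hence $P=0$; being a sum of non-negative terms, $P$ vanishes only if each term does. Here the standing exclusion of the decoupled case $D=\qbr=0$ is essential: if $D>0$ the monomials carrying the factor $D$ furnish representatives of each of the products $\qalv\met$, $\qalv\qliv$ and $\met\qliv$ with strictly positive coefficients, and if instead $\qbr>0$ the $\qbr$-monomials provide these same three products with strictly positive coefficients. In either case $P=0$ forces $\qalv\met=\qalv\qliv=\met\qliv=0$, which for non-negative parameters means precisely that two of $\qalv,\qliv,\met$ vanish. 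I expect the main obstacle to be the bookkeeping in this $3\times 3$ expansion; the two devices that tame it are the row replacement supplied by mass conservation and the proportionality of the fourth column to $(0,1,0,-1)^{T}$.
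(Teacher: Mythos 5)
Your proposal is correct, and its backbone coincides with the paper's own proof: you use exactly the same volume-weighted conservation identity $(\vbro,\valv,\vliv,\vtis)A=(-\qalv,0,-\met\lliv,0)$, and your matrix $\tilde A=M\,\mathrm{diag}(\vbro,\valv,\vliv,\vtis)^{-1}$ is precisely the paper's similarity transform $UAU^{-1}$ with $U=\mathrm{diag}(\vbro,\valv,\vliv,\vtis)$, after which both arguments invoke Gershgorin on the columns. You deviate in two places. First, for strict negativity when $\det A\neq 0$ you appeal to M-matrix theory ($-\tilde A$ is a Z-matrix with spectrum in the closed right half-plane, hence an M-matrix, and nonsingular M-matrices are positive stable); the paper gets the same conclusion more elementarily by observing that every Gershgorin disc touches the imaginary axis at most at the origin, so a purely imaginary eigenvalue must be zero, which $\det A\neq 0$ excludes. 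Second, for the determinant the paper merely asserts the factorization $\det(A)=\qc\big((\vartheta_1\qalv+\vartheta_2\met)\qliv\qc+\vartheta_3\qalv\met\big)$ with $\vartheta_j>0$, whereas you lay out an explicit route (row replacement via the conservation identity, factoring $\lliv$ and the column proportional to $(0,1,0,-1)^T$, cofactor expansion) and state the resulting monomial structure; your claims check out against direct computation, every monomial indeed carries at least two of $\qalv,\qliv,\met$ together with a coefficient that is strictly positive once $\qc>0$ and either $D>0$ or $\qbr>0$ holds. A genuine merit of your write-up is that you make explicit where the standing exclusion of the decoupled case $D=\qbr=0$ enters: without it the "only if" direction fails (e.g.\ $\vartheta_1$ vanishes), a dependence the paper leaves implicit in its unqualified claim $\vartheta_j>0$. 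Both proofs then finish identically, reading off the equivalence from a sum of non-negative terms.
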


\begin{proof}
For this it is sufficient to confirm that $A$ is diagonally dominant, i.e., there exists a vector $\zb > 0$ such that the row vector $\zb^T A$ is non-positive. This is a simple consequence of the fact that in such a case $A$ can be shown to be similar to a matrix $\tilde{A}$ which has the claimed property. Indeed, if we define the diagonal matrix $U:=\mathrm{diag}(z_1,\ldots,z_n)$ and set $\tilde{A}:=UAU^{-1}$, it holds that 
\begin{equation}\tilde{a}_{jj}+\sum_{i\neq j}\abs{\tilde{a}_{ij}}=a_{jj}+\sum_{i\neq j}\frac{a_{ij}z_i}{z_j}=\frac{1}{z_j}(\zb^T A)_j \leq 0\end{equation}
for all $j$ since $A$ is Metzler. Hence, the first claim will follow from Gershgorin's circle theorem. The
required vector $\zb$ follows from
\[
(\vbro,\valv,\vliv,\vtis)A=(-\qalv,0,-\met\lliv,0).
\]
Moreover, the only case when the real part of an eigenvalue can vanish is when it lies on the boundary of a circle touching the imaginary axis,
that is, when an eigenvalue is zero. This proves the first part.

To show the second part one verifies that
\begin{equation}
\det(A) = \qc \big((\vartheta_1 \qalv + \vartheta_2 \met) \qliv \qc + \vartheta_3 \qalv \met\big)
\end{equation}
with $\vartheta_j >0$.
\end{proof}

This completes the minimal set of properties which necessarily should to be satisfied in any valid model of concentration dynamics. In particular, global asymptotic stability for the autonomous system ensures that -- starting from arbitrary initial values -- the compartmental concentrations will approach a unique equilibrium state once the physiological inputs $\ub$ affecting the system are fixed. Obviously, this is of paramount importance when aiming at the description of processes exhibiting pronounced steady states (which in the context of breath gas analysis corresponds, e.g., to the situation encountered during rest or constant workload~\cite{setup}) and is a prerequisite for orchestrating reproducible experiments. Hence, in the context of VOC modeling, any approach not incorporating this property will lack a fundamental characteristic of the observed data.

\begin{remark}
For perspective, we stress the fact that if the rate of metabolism in Equations~\eqref{eq:bro}--\eqref{eq:tis} is described by saturation rather than linear kinetics, i.e., if the term $\met\cliv\lliv$ in Equation~\eqref{eq:liv} is replaced by Equation~\eqref{eq:metmenten}, the system becomes essentially nonlinear. However, all conclusions drawn so far remain generally valid. Firstly, by applying the Mean Value Theorem note that the Michaelis-Menten term above is Lipschitz and so again a global Lipschitz property holds for the right-hand side of the resulting ODE system. Positivity and boundedness of the solutions for arbitrary inputs $\ub$ can be established in analogy with the arguments in the last paragraphs.
Effectively, in order to show boundedness from above, we first note that Equation~\eqref{eq:massderiv} now reads 
\begin{equation}\dot{m}=\pr-\frac{\vmax\cliv\lliv}{\km+\cliv\lliv}+\qalv(\cinh-\cbr)
\end{equation}
and as a result $\cbr$ is unconditionally bounded if $\qalv > 0$. From Equation~\eqref{eq:bro} it then follows that $\calv$ must be bounded and subsequently $\cven$ (and hence $\cliv$ as well as $\ctis$) are bounded using Equation~\eqref{eq:alv}. In the case $\qalv=0$, the inequality 
\begin{equation}\label{eq:bounded}\pr\leq\vmax \end{equation} is necessary and sufficient for boundedness. Necessity is easily deduced from the fact that $\pr>\vmax$ implies $\dot{m}>0$ and consequently $m$ (i.e., at least one component $c_i$) is unbounded. Conversely, Equation~\eqref{eq:bounded} ensures that $\cliv$ is bounded and hence the same arguments as in the linear case apply.\\ 
In order to establish the existence of a globally asymptotically stable equilibrium point for fixed $\ub$ we adopt a result on monotone systems taken from Leenheer et. al.~\cite{leenheer2007}, see also~\cite{jifa1994}. For monotone systems in general with applications to biological systems we refer to~\cite{smithmonotone,angeli2003}. Firstly, note that the steady state relation associated with Equation~\eqref{eq:liv} now becomes
\begin{equation}0=\cliv^2+a_1\cliv +a_0\end{equation}
with $a_0<0$. Thus, there exists a unique positive steady state solution for the liver concentration $\cliv$ and consequently the same can be confirmed to hold true for the other components of $\cb$ as well. As a consequence, it follows that the model has a unique equilibrium point in the non-negative orthant $\mathcal{X}:=\R^n_{\geq 0}$.
Moreover, note that for the right-hand side $\gb$ of the underlying ODE system (cf.~Equation~\eqref{eq:sysc}) it holds that
\begin{equation}\frac{\partial g_i}{\partial c_j} \geq 0\;\textrm{for}\;i\neq j,\;i,j \in \{1,\ldots,n\},\end{equation}
i.e., the system is cooperative. This term stems from the fact that a given component is positively affected by the remaining ones. For linear systems, an equivalent characterization is that the corresponding system matrix $A$ is Metzler. A well-known result on systems of this type asserts that the associated semiflow $\Phi:\R_{\geq 0}\times \R^n \to \R^n,\;(t,\cb_0)\mapsto \Phi_t(\cb_0):=\cb(t)$ is monotone with respect to the natural (componentwise) partial order on $\R^n$ given by
\[\cb \leq \zb\;\textrm{if and only if}\; c_i \leq z_i,\; i \in \{1,\ldots,n\}.\]
That is, $\Phi$ preserves the order of the initial conditions (see~\cite{smithmonotone}, Prop.~3.1.1), i.e., for $\cb_0,\tilde{\cb}_0 \in \R^n$ the condition $\cb_0 \leq \tilde{\cb}_0$ implies that $\Phi_t(\cb_0) \leq \Phi_t(\tilde{\cb}_0)$ for $t \in [0,\infty)$.
As a third requirement, since all trajectories are bounded and the system evolves within the closed state space $\mathcal{X} \subset \R^n$ it follows that for every $\cb \in \mathcal{X}$ the corresponding semi-orbit $O(\cb):=\{\Phi_t(\cb),\;t\geq 0\}$ has compact closure. 
In summary, we are now in the situation to apply Theorem~5 of~\cite{leenheer2007}, which asserts that the aforementioned properties are sufficient for the unique equilibrium point in $\mathcal{X}$ to be globally attractive. 
\end{remark}

\section{A primer on local observability}\label{sect:krener}
For the convenience of the reader, the following section serves as a self-contained overview of the concept of (local) observability for analytic, single-output state-space models of the form
\begin{align}\label{eq:auton}\tag{$\Sigma$}
\begin{split}
\dot{\xb} &= \gb(\xb,\ub),\quad \xb(0)=\xb_0,\\
y &= h(\xb,\ub).
\end{split}
\end{align}
Here it is assumed that $\xb(t)$ evolves within an open set $\mathcal{X} \subseteq \R^n$, $h:\mathcal{X} \to \R$ is real analytic and $\gb: \mathcal{X} \to \R^n$ is an analytic vector field. Furthermore, it is required that a unique solution exists, e.g., by assuming that $\gb$ is globally Lipschitz on $\mathcal{X}$. Since $\gb$ is assumed to be analytic, the solution will be analytic with respect to both $t$ and $\xb_0$.
The output $y=y(t,\xb_0)$ is viewed as a function of the initial condition $\xb_0$. It will be analytic as well. 

We specialize to the case where the inputs are constant and hence can be interpreted as additional known parameters of the system. In this case~\eqref{eq:auton} can be seen as a valid description for many biological processes under constant measurement conditions, thus representing a sufficient framework for the type of models and experiments considered here. Further exposition follows the treatment in~\cite{krener,nijmeier,sontag} and~\cite{anguelova}. \\

Loosely spoken, local observability characterizes the fact that $\xb_0$ can be instantaneously distinguished from any of its neighbors by using the system response. This is made explicit in the following definition.
\begin{definition}
Let $V \subseteq \mathcal{X}$. Two initial conditions $\xb_0,\tilde{\xb}_0 \in V$ are $V$-indistinguishable ($\xb_0 \sim_V \tilde{\xb}_0$) if for every $T > 0$ such that the corresponding trajectories remain in $V$ it holds that \[y(t,\xb_0) = y(t,\tilde{\xb}_0)\]
for all $t \in [0,T]$. A system~\eqref{eq:auton} is called locally observable at $\xb_0 \in \mathcal{X}$, if there exists a neighborhood $V \subseteq \mathcal{X}$ of $\xb_0$ such that for all points $\tilde{\xb}_0$ in $V$ the relation $\xb_0 \sim_V \tilde{\xb}_0$ implies that $\xb_0 = \tilde{\xb}_0$.
\end{definition}
In other words, local observability characterizes the fact that $\xb_0 \mapsto y(.,\xb_0)$ is injective for arbitrarily small times $T$. For the sake of illustration, consider the system
\begin{equation}
\begin{pmatrix}\dot{x}_1 \\ \dot{x}_2 \\\dot{x}_3 \end{pmatrix}=\begin{pmatrix}-(x_2+x_3)x_1\\0\\0\end{pmatrix},\quad y=x_1,
\end{equation}
then
\[y=x_{1,0}\exp(-(x_{2,0}+x_{3,0})t).\]
Evidently, this toy model is not locally observable as the slightly perturbed initial conditions $\tilde{x}_{2,0}:=x_{2,0}+\varepsilon$ and $\tilde{x}_{3,0}:=x_{3,0}-\varepsilon$ result in exactly the same response $y$ for any $\varepsilon > 0$. If we interpret $x_1$ and $x_2$ as constant parameters, the above situation corresponds to a typical case of over-parameterization, i.e., adding $x_3$ does not provide a more detailed description of the observable output. On the other hand, from the reversed viewpoint of estimation, this means that we will never be able to simultaneously estimate both $x_{2,0}$ and $x_{3,0}$ from the available process data but can only assess their sum. 

\begin{remark}An equivalent characterization of this fact can be given in the context of sensitivities, i.e., by considering the partial derivatives of the output with respect to the parameters. Indeed, we immediately see that
\[\frac{\partial y}{\partial x_{2,0}}  = \frac{\partial y}{\partial x_{3,0}},\] 
i.e., the sensitivities are linearly dependent, which is well-known to preclude a joint estimation of both parameters, see~\cite{cobelli1980,jac1990}. Intuitively, the reason for this is that from the the Taylor expansion of first order
\[y(\tilde{x}_{2,0},\tilde{x}_{3,0})\approx y(x_{2,0},x_{3,0})+\frac{\partial y}{\partial x_{2,0}}\varepsilon-\frac{\partial y}{\partial x_{3,0}}\varepsilon=y(x_{2,0},x_{3,0}),\]
which again confirms that any small change in $x_{2,0}$ can be compensated for by an appropriate change in $x_{3,0}$ to yield almost the same output. 
A rigorous statement of this fact in the general unconstrained case is given in~\cite{beckbook}. In contrast, as will be illustrated henceforth, local observability guarantees linear independence of the derivatives of $y$ with respect to $\xb_0$. This is also a necessary requirement for the broad class of numerical parameter estimation schemes based on the minimization of a given cost functional using first-order information.
\end{remark}
 
In more complex models, where an analytical expression for $y$ can usually not be derived, we have to resort to differential-geometric methods for investigating local observability. These tools will only be developed to the extent necessary for stating the final observability criterion, but are usually defined in a much more general framework than the one presented here.\\
For this purpose, first note that using the chain rule, the time derivative of the output equals the Lie derivative $L_{\gb} h$ of the output function $h$ with respect to the vector field $\gb$, i.e.,
\[\dot{y}= (\nabla h) \cdot \gb =: L_{\gb} h\]
which again is analytic. Here, $\nabla = (\frac{\partial}{\partial x_1},\dots,\frac{\partial}{\partial x_n})$ denotes the gradient and $\cdot$ denotes the scalar product in $\R^n$.
Similarly, by iteratively defining
\[L_{\gb}^{(k)}h=L_{\gb}^{\phantom{(}} L_{\gb}^{(k-1)}h, \quad L_{\gb}^{(0)}h=h,\]
we immediately see that
\[y^{(k)}(\xb_0):= \frac{\partial^k}{\partial t^k}  y(t,\xb_0) \Big|_{t=0} = L_{\gb}^{(k)}h(\xb_0).\]
By our analyticity assumption on the system~\eqref{eq:auton}, the system response $y$ is analytic can hence be
represented by a convergent power (Lie) series around $t=0$:
\begin{equation}\label{eq:lieseries}
y(t,\xb_0)=\sum_{k=0}^{\infty}\frac{t^k}{k!} y^{(k)}(\xb_0) =\sum_{k=0}^{\infty}\frac{t^k}{k!}L_{\gb}^{(k)}h(\xb_0).
\end{equation}
As a result, the relation $\xb_0 \sim_V \tilde{\xb}_0$ is characterized by the fact that for all $k \geq 0$ it holds that
\[L_{\gb}^{(k)}h(\xb_0) = L_{\gb}^{(k)}h(\tilde{\xb}_0).\]
This motivates the following two definitions. The linear space over $\R$ of all iterated Lie derivatives
\[\mathcal{O}(\xb_0) :=\mathrm{span}_{\R} \{ L_{\gb}^{(k)}h(\xb_0),\;k \geq 0 \},\]
is called the observation space of the model.
Moreover, $\nabla \mathcal{O}:=\{\nabla H,\;H\in \mathcal{O}\}$ denotes the associated observability co-distribution.
The famous Hermann-Krener criterion~\cite{krener} now states that the model is locally observable at $\xb_0$ if 
\begin{equation}\label{eq:obscrit}\mathrm{dim}\nabla \mathcal{O}(\xb_0)=n.\end{equation}

The sufficient condition~\eqref{eq:obscrit} is equivalent to finding $n$ indices $k_j$ such that the gradients $\nabla L_{\gb}^{(k_j)}h(\xb_0)$ are linearly independent. In fact, it turns out that only the first $n$ iterated Lie derivatives have to be computed in order to decide if such indices exist~\cite{anguelova}.

Summing up yields a simple algebraic rank criterion for local observability at a fixed point $\xb_0$, namely
\begin{equation}\label{eq:defJ}\mathrm{rank}\;\mathbf{J}:=\mathrm{rank}\lk\nabla L_{\gb}^{(0)}h(\xb_0)^T,\ldots,\nabla L_{\gb}^{(n-1)}h(\xb_0)^T\rk=n.\end{equation}
\begin{example}
Note that for a linear system
\begin{align*}
\dot{\xb}&=A\xb,\\
y&=C\xb
\end{align*}
the above condition simplifies to
\[\mathrm{rank}\lk C^T,A^T C^T,\ldots,A^{(n-1)\,T}C^T\rk=n,\]
which is the well-known criterion introduced by Kalman (see, e.g.,~\cite{sontag}). Obviously, if the linear model is locally observable at one particular initial condition, local observability holds for every other initial condition as well. 
\end{example}

To conclude this section let us assume that $\gb$ and $h$ belong to the field of rational functions in $\xb$ and $\ub=(u_1,\ldots,u_q)$ over $\R$, i.e., $\gb,h \in \R(\xb,\ub)$ with no real poles in $\mathcal{X}$. Such rational models constitute an acceptable simplification for most biological processes describing mass balance systems and including standard kinetic mechanisms (e.g., Michaelis-Menten terms). 
Consequently, the entries of the matrix $\mathbf{J}$ defined in~\eqref{eq:defJ} belong to $\R(\xb,\ub)$ and hence the rank of $\mathbf{J}(\xb,\ub)$ can be defined as the maximum size of a submatrix whose determinant is a non-zero rational function. This so-called generic rank can easily be determined by symbolic calculation software.
Hence, if the generic rank of $\mathbf{J}$ is $n$, the rank over $\R$ of $\mathbf{J}(\xb,\ub)$ 
evaluated at one specific point $(\xb,\ub) \in \R^{n+q}$ will also be $n$ except for certain singular choices that are algebraically dependent over $\R$. As these special points constitute a set of measure zero, we may thus claim that if the generic rank of $\mathbf{J}(\xb,\ub)$ equals $n$, almost every experiment corresponding to a choice of constant input variables $\ub$ will suffice to immediately distinguish a particular unknown initial condition $\xb_0$ from its neighbors by giving rise to a unique process output $y$. 

This powerful result is strengthened further by invoking the Universal Input Theorem for analytic systems due to Sussmann~\cite{sussmann1979}, ensuring that if a pair of unknown initial conditions can be distinguished by some (e.g., constant) input, it can also be distinguished by almost every input belonging to $\cinf([0,T]),\;T>0$ (see also~\cite{sontag1994} for a rigorous statement as well as a self-contained proof). Note that in the context of physiological models this is the appropriate class of functions for capturing a wide range of applicable physiological conditions.

\begin{remark}
The above line of argumentation remains valid for analytic $\gb$ and $h$. However, the generic rank in this case usually has to be estimated, e.g., by computing the rank of $\mathbf{J}$ evaluated at some randomly assigned points $(\xb,\ub) \in \R^{n+q}$.
\end{remark}

While from the above an affirmative outcome of the rank test guarantees that every initial condition within a neighborhood of the ``true'' value $\xb_0$ will yield a unique output $y$ when conducting a generic experiment with smooth inputs, the effective reconstruction of $\xb_0$ from partially observed and error-corrupted data remains a challenging problem. For instance, despite a positive result of the rank criterion, nothing can be said about the degree of linear independence between the derivatives of the output with respect to the initial conditions. In fact, the associated sensitivity matrix can still be rank-deficient in a numerical sense, indicating an \emph{ill-conditioned} problem. This is the context of practical identifiability or estimability~\cite{cobelli1980,jac1985,jac1990}, which will particularly depend on the specific experimental situation under scrutiny.

\newpage

\section{Nomenclature}
\begin{table}[H]
\centering 
\caption{Basic model parameters and nominal values during rest; LBV denotes the lean body volume in liters calculated according to $\textrm{LBV}=-16.24+0.22\,\mathrm{bh}+0.42\,\bw$, with body height ($\mathrm{bh}$) and weight ($\bw$) given in cm and kg, respectively~\cite{moerk2006}.}\label{table:param}
\begin{tabular}{|lcc|}\hline
 {\large\strut}Parameter&Symbol&Nominal value (units)\\ \hline \hline
{\large\strut}\textit{Compartment concentrations} & & \\ 
{\large\strut}	bronchioles &$\cbr$,\;$\cmeas$ & 1 ($\mu$g/l)~\cite{schwarzace}\\
{\large\strut}	alveoli &$\calv$ & \\
{\large\strut}	arterial &$\cart$ & 1 (mg/l)~\cite{wigaeus1981,kalapos2003}\\
{\large\strut}	mixed-venous &$\cven$ & \\
{\large\strut}	liver &$\cliv$ & \\
{\large\strut}	tissue &$\ctis$ & \\
{\large\strut}	inhaled (ambient) &$\cinh$ & \\
{\large\strut}\textit{Compartment volumes} & & \\
{\large\strut}	bronchioles &$V_{\mathrm{bro}}$ & 0.1 (l)~\cite{moerk2006}\\
{\large\strut}	mucosa &$V_{\mathrm{muc}}$ & 0.005 (l)~\cite{moerk2006}\\
{\large\strut}	alveoli &$V_{\mathrm{A}}$ & 4.1 (l)~\cite{moerk2006}\\
{\large\strut}	end-capillary &$V_{\mathrm{c'}}$ & 0.15 (l)~\cite{pulmcirc}\\
{\large\strut}	liver &$V_{\mathrm{liv}}$ & 0.0285\,LBV (l)~\cite{moerk2006}\\
{\large\strut}	blood liver &$V_{\mathrm{liv,b}}$ & 1.1 (l)~\cite{ottesen2004}\\
{\large\strut}	tissue &$V_{\mathrm{tis}}$ & 0.7036\,LBV (l)~\cite{moerk2006}\\
{\large\strut}\textit{Fractional blood flows at rest} & & \\
{\large\strut}	fractional flow bronchioles &$\qbr$ & 0.01~\cite{lumbbook}\\
{\large\strut}	fractional flow liver&$\qliv$ & 0.32~\cite{moerk2006}\\
{\large\strut}\textit{Partition coefficients at body temperature} & & \\
{\large\strut}	blood:air &$\hen$ & 340~\cite{anderson2006,crofford1977}\\
{\large\strut}  mucosa:air &$\lmuca$ & 392~\cite{staudinger2001,kumagai1995}\\
{\large\strut}  blood:liver &$\lliv$ & 1.73~\cite{kumagai1995}\\
{\large\strut}  blood:tissue &$\ltis$ & 1.38~\cite{anderson2006}\\
{\large\strut}\textit{Metabolic and diffusion constants} & & \\
{\large\strut}	linear metabolic rate &$\met$ & 0.0074 (l/min/kg$^{\textrm{0.75}}$) [fitted]\\
{\large\strut}  saturation metabolic rate &$\vmax$ & 0.31 (mg/min/kg$^{\textrm{0.75}}$)~\cite{kumagai1995}\\
{\large\strut}  apparent Michaelis constant &$\km$ & 84 (mg/l)~\cite{kumagai1995}\\
{\large\strut}  endogenous production &$\pr$ & 0.19 (mg/min) [fitted]\\
{\large\strut}  stratified conductance &$D$ & 0 (l/min) [fitted]\\
\hline
\end{tabular}
\end{table}

\bibliographystyle{spmpsci}
\bibliography{AcetoneBib}

\end{document}